\begin{document}
\begin{frontmatter}
\title{On bounding the difference between the maximum degree and the chromatic number by a constant}
\author[vw]{Vera Weil\corref{cor1}}
\ead{vera.weil@oms.rwth-aachen.de}

\author[os]{Oliver Schaudt}
\ead{schaudto@uni-koeln.de}

\cortext[cor1]{Corresponding author}
\address[vw]{Chair of Management Science, RWTH Aachen University, 
Kackertstr.~7, 52072 Aachen, Germany}
\address[os]{Department for Computer Science, University of Cologne, 
Weyertal 80, 50931 Cologne, Germany}

\begin{abstract}

We provide a finite forbidden induced subgraph characterization for the 
graph class~$\varUpsilon_k$, 
for all $k \in \mathbb{N}_0$, which is defined as follows. 
A graph is in $\varUpsilon_k$ if for any induced subgraph, 
$\Delta \leq \chi -1 + k$ holds,
where $\Delta$ is the maximum degree and $\chi$ is the chromatic number of the subgraph.

We compare these results 
with those given in~\cite{unserpaper}, 
where we studied the graph class $\varOmega_k$, 
for $k \in \mathbb{N}_0$, 
whose graphs are such that for any induced subgraph,  
$\Delta \leq \omega -1 + k$ holds, 
where $\omega$ denotes the clique number of a graph. 
In particular, 
we give a characterization in terms of $\varOmega_k$ and $\varUpsilon_k$ of those graphs where 
the neighborhood of every vertex is perfect. 
\end{abstract}

\begin{keyword}
maximum degree, graph coloring, chromatic number, structural characterization of families of graphs,  hereditary graph class, 
neighborhood perfect graphs.
\end{keyword}
\end{frontmatter}

\theoremstyle{plain}
\newtheorem{theorem}{Theorem}
\newtheorem{lemma}{Lemma}
\newtheorem{observation}{Observation}
\newtheorem{proposition}{Proposition}
\newtheorem{corollary}{Corollary}

\section{Introduction}

A graph class $\mathcal G$ is called \emph{hereditary} if for every graph $G \in \mathcal G$, 
every induced subgraph of $G$ is also a member of $\mathcal G$. 
If we can describe a graph class $\mathcal G$ by excluding a (not necessarily finite) set of graphs as induced subgraphs, then this graph class is hereditary.
A \textit{clique} in a graph is a set of vertices of the graph that are pairwise adjacent.  
A maximal clique that is of largest size in a graph $G$ is called a \emph{maximum clique} of the graph. 
By $\omega(G)$ we denote the size of a maximum clique in a graph $G$. 
A \textit{coloring} of a graph is 
an assignment of colors to every vertex of the graph 
such that adjacent vertices do not receive the same color. 
A coloring that uses a minimum number of colors is called 
an \textit{optimal coloring}. 
The number of colors used in an optimal coloring of a graph 
$G$ is denoted by $\chi(G)$, the so called \textit{chromatic number}. 
With $V(G)$, 
we denote the vertex set of a graph $G$. 
The 
\textit{neighborhood} of a vertex $v \in V(G)$ is denoted by $N(v)$ 
and comprises all vertices adjacent to $v$. 
The \textit{degree} of $v$ corresponds to $|N(v)|$. 
Finally, $\Delta$ denotes the \textit{maximum degree} of a graph, 
that is, the maximum over all vertex degrees in a graph.

By Brook's Theorem~\cite{Brooks}, $\chi \leq \Delta + 1$ holds. 
However, it is not possible to give a \textit{lower} bound on $\chi$ in terms of $\Delta$ only.
By $K_{n,m}$, $n,m \in \mathbb{N}$, 
we denote the complete bipartite graph where one partition 
consists of $n$ and the other of $m$ vertices 
(for example, the $K_{1,4}$ can be found in Figure~\ref{figure F2 graphs chromatic}). 
The set of graphs $K_{1,p}$, for all $p \in \mathbb{N}$, yields an example for an infinite family of graphs where the 
difference between $\chi$ and $\Delta$ is $p-2$, and hence unbounded in this set.

Let $p \in \mathbb{N}$. 
With $J_p$, we denote the following graph.    
Consider a clique of size $p-1$ and a $K_{1,p}$. 
Let $a$ be a vertex in the p-partition of $K_{1,p}$ 
and let $b$ be a vertex in the clique. 
Add the edge $\{a,b\}$ (cf.~Figure~\ref{figure JP}).

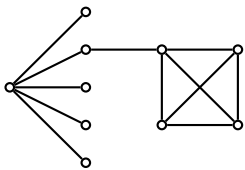
\begin{figure}[htb]
\begin{center}
\begin{pspicture}(5,2.5)(0,0)

  %K1p
	  %Knoten
	  \cnode(0.0,1.0){0.07}{A} %Zentrum
	  \cnode(1.0,2.0){0.07}{D}
	  \cnode(1.0,1.5){0.07}{C}
 	  \cnode(1.0,1.0){0.07}{B} 
	  \cnode(1.0,0.5){0.07}{F}
	  \cnode(1.0,0.0){0.07}{E}
% 	  
	  %Kanten
 	  \ncline[linestyle=solid]{A}{B}
 	  \ncline[linestyle=solid]{A}{C}
 	  \ncline[linestyle=solid]{A}{D}
 	  \ncline[linestyle=solid]{A}{E}
 	  \ncline[linestyle=solid]{A}{F}
 	  
 %complete	  
 	  \cnode(2.0,1.5){0.07}{Q} %Zentrum
	  \cnode(2.0,0.5){0.07}{R} %Zentrum
	  \cnode(3.0,1.5){0.07}{S} %Zentrum
	  \cnode(3.0,0.5){0.07}{T} %Zentrum
	  
 	  \ncline[linestyle=solid]{Q}{R}
 	  \ncline[linestyle=solid]{Q}{S}
 	  \ncline[linestyle=solid]{Q}{T}
 	  \ncline[linestyle=solid]{R}{S}
 	  \ncline[linestyle=solid]{R}{T}
 	  \ncline[linestyle=solid]{S}{T}

 	  \ncline[linestyle=solid]{Q}{C}
\end{pspicture}
\end{center}
\caption{The graph $J_4$.}%caption end
\label{figure JP}
\end{figure}

In the resulting graph, 
$\Delta$ and $\chi$ equal $p+1$, 
hence the difference equals $0$, 
but the induced $K_{1,p}$ yields a graph where the difference is $p-2$. 
In other words, 
the value of the difference between the maximum degree and the chromatic number in the host graph 
is not necessairily an upper bound for the respective difference in an induced subgraph. 
This gives rise to the following question. 
Which graphs guarantee that for every induced subgraph, 
the difference between the maximum degree and the chromatic number is at most some given number $k$?

In Section~\ref{Sec2}, we answer the above question in the following way. 
For a fixed $k \in \mathbb{N}_0$, 
let $\varUpsilon_k$ be the class of graphs $G$ for which $\Delta(H) + 1\leq \chi(H) + k$ holds for all induced subgraphs $H$ of $G$. 
We provide a minimal forbidden induced subgraph characterization for $\varUpsilon_k$. 
Moreover, we prove that the order of the respective minimal forbidden induced subgraph set is finite.
Hence the problem of recognition of such graphs can be solved in polynomial time. 

In Section~\ref{section neighborhood perfect graphs}, 
we connect the graph classes considered in Section~\ref{Sec2} to the graph classes in~\cite{unserpaper}.   
In the latter, we considered $\varOmega_k$, where $G \in \varOmega_k$ if for all induced subgraphs $H$ of $G$, $\Delta(H) \leq \omega(H) + k$ holds. 
%Among other results we proved that it is possible to characterize $\varOmega_k$ by a finite set of minimal forbidden induced subgraphs. 

In Section~\ref{Sec4}, we close this paper by giving a short summary of the results and pointing out some future directions, 
including generalizations of the results presented here.   

Before we move on the Section~\ref{Sec2}, we give some further preliminary information. 
We distinguish between induced subgraphs and (partial) subgraphs. %, the latter denoted by $G \subseteq H$.  
Since we deal with graph invariants, we are allowed to treat isomorphic graphs as identical.
For example, if a graph $G$ is an induced subgraph of a graph $H$ and $G$ is isomorphic to a graph $L$, then we say that $L$ is an induced subgraph of $H$.
A vertex is \textit{dominating} in a graph if it is adjacent to all other vertices of the graph.
In a coloring of a graph, a \textit{color class} is the set of all vertices to which the same color is assigned to. 

We denote the set of minimal forbidden induced subgraphs of $\varUpsilon_k$ by $F(\chi,k)$, 
In other words, $F \in F(\chi,k)$ if and only if $F \not \in \varUpsilon_k$ and all proper induced subgraphs of $F$ are contained in $\varUpsilon_k$. 
Observe that $G \in \varUpsilon_k$ if and only if $G$ is $F(\chi,k)$-free.

\section{The hereditary graph class $\varUpsilon_k$, $k \in \mathbb{N}_0$}\label{Sec2}

%we construct the following graph. 
%In a $K_{1,p}$ (the complete bipartite graph where one partition consists of one vertex, and the other partition of $p$ vertices), 
%choose one vertex in the $p$-partition, say $s$, 
%and in a clique of size $p$, choose a further vertex, say $t$.
%Connect $s$ and $t$ with an edge. 
%The difference of $\Delta$ and $\chi$ of the resulting graph is $0$, 
%but the induced $K_{1,p}$ yields a graph where the difference is $p-2$. 
%For example, consider a bipartite graph with arbitrary $\Delta$ to which a clique of order $\Delta$ is attached.

%Recall from the introduction that for a fixed $k \in \mathbb{N}_0$, 
%$\varUpsilon_k$ contains all graphs $G$ such that for all 
%induced subgraphs of $G$, the difference between the 
%the maximum degree $\Delta$ and the 
%chromatic number $\chi$ 
%of the induced subgraph is bounded by $k-1$. 
%The set of minimal forbidden induced subgraphs of $\varUpsilon_k$ 
%is denoted by $F(\chi,k)$. 
%We further use the following notions: 
Let $k \in \mathbb{N}_0$. 
In~\cite{unserpaper}, 
the authors introduced a family of hereditary graph classes that is quite similar to $\varUpsilon_k$, 
namely $\varOmega_k$. 
By definition, 
$G \in \varOmega_k$ if every induced subgraph $H$ of $G$, including $G$ itself, 
obeys $\Delta(H) \leq \omega(H) + k - 1$. 
Let $F(\omega,k)$ denote the set of minimal forbidden induced subgraphs of $\varOmega_k$. 
We reword the characterization of $F(\omega,k)$ from~\cite{unserpaper} 
in~Theorem~\ref{charac of Fk}. 

\begin{theorem}[\cite{unserpaper}]\label{charac of Fk}
Let $G$ be a graph. 
$G \in F(\omega,k)$ if and only if the following conditions hold:
\begin{enumerate}
  \item $G$ has a unique dominating vertex $v$,  \label{first cond}
  \item the intersection of all maximum cliques of G contains solely $v$,  \label{second cond}
  \item $\Delta(G) = \omega(G)+k$ \label{third cond}.
\end{enumerate}
In particular, $\Delta(G) = |V(G)|-1$ and $\omega(G)=|V(G)| - k - 1$.
\end{theorem}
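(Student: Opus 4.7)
\emph{Proof plan.} The strategy is to recast membership in $\varOmega_k$ as a vertex-local condition. For any graph $G$ and $w \in V(G)$, set
\[
  d^{\ast}_G(w) := \deg_G(w) - \omega(G[N(w)]).
\]
Among induced subgraphs $H$ containing $w$, the quantity $\deg_H(w) - \omega(H)$ is maximised at $H = G[\{w\} \cup N(w)]$, where it equals $d^{\ast}_G(w) - 1$. Hence $G \in \varOmega_k$ if and only if $d^{\ast}_G(w) \leq k$ for every $w \in V(G)$, and membership in $F(\omega,k)$ reduces to: there is some $w_0$ with $d^{\ast}_G(w_0) \geq k+1$, while $d^{\ast}_{G-u}(w) \leq k$ for all $u \in V(G)$ and $w \in V(G-u)$.

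For the implication ``$G \in F(\omega,k) \Rightarrow$ (1)--(3)'', I would begin with a witness $w_0$ of $d^{\ast}_G(w_0) \geq k+1$. If some $u$ were non-adjacent to $w_0$, then $N_{G-u}(w_0) = N(w_0)$ would give $d^{\ast}_{G-u}(w_0) = d^{\ast}_G(w_0) > k$, contradicting $G-u \in \varOmega_k$; so $w_0$ is dominating. Writing $v := w_0$ and $n := |V(G)|$, dominance yields $\omega(G-v) = \omega(G)-1$ (any maximum clique of $G$ extends by $v$), hence $d^{\ast}_G(v) = n - \omega(G) \geq k+1$, so $\omega(G) \leq n-k-1$. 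If this inequality were strict, any $u \neq v$ would produce $d^{\ast}_{G-u}(v) \geq (n-2) - (\omega(G)-1) \geq k+1$, a contradiction; this establishes (3). Conditions (1) and (2) follow by the same template: a second dominating vertex, or a second vertex in every maximum clique, would upon deletion drop $\omega$ by one while preserving $\deg_{G-u}(v) = n-2$, pushing $d^{\ast}_{G-u}(v)$ to $k+1$ and contradicting minimality.

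For the converse, assume (1)--(3). Condition (3) directly yields $d^{\ast}_G(v) = n - \omega(G) = k+1$, so $G \notin \varOmega_k$. To show $G-u \in \varOmega_k$ for each $u$, I need $d^{\ast}_{G-u}(w) \leq k$ for every $w \neq u$. When $w = v$ and $u \neq v$, condition (2) guarantees $\omega(G-u) = \omega(G)$, and one computes directly $d^{\ast}_{G-u}(v) = (n-2) - (\omega(G)-1) = k$. When $w \neq v$, a short monotonicity check (removing a neighbour of $w$ lowers $\deg_G(w)$ by one and $\omega(G[N(w)])$ by at most one) yields $d^{\ast}_{G-u}(w) \leq d^{\ast}_G(w)$, so it suffices to bound $d^{\ast}_G(w)$ by $k$. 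For this, I invoke condition (2) to pick a maximum clique $C$ of $G$ avoiding $w$; since $C$ can contain at most $n-1-\deg_G(w)$ non-neighbours of $w$, we obtain
\[
  |C \cap N(w)| \geq (n-k-1) - (n-1-\deg_G(w)) = \deg_G(w) - k,
\]
whence $\omega(G[N(w)]) \geq \deg_G(w) - k$ and $d^{\ast}_G(w) \leq k$, as required.

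The supplementary claims $\Delta(G) = n-1$ and $\omega(G) = n-k-1$ are immediate by-products of (1) and (3). The main obstacle I anticipate is the final counting step: picking precisely a maximum clique avoiding $w$ and turning the crude bound on non-neighbours of $w$ into a matching lower bound on $\omega(G[N(w)])$ is the single place where the full force of condition (2) is exploited, and it is what makes the whole backward direction collapse to a one-line verification in each case.
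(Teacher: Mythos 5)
Your argument is correct, and it is worth noting that the paper does not actually prove Theorem~\ref{charac of Fk} at all --- it is quoted from~\cite{unserpaper} --- so the only in-paper point of comparison is the proof of the analogous Theorem~\ref{charac of Fk chromatic}. Your route is genuinely different and self-contained. The key move is the reformulation of membership in $\varOmega_k$ via the local parameter $d^{\ast}_G(w)=\deg_G(w)-\omega(G[N(w)])$, which correctly packages the observation (also implicit in the paper's proof of Theorem~\ref{charac of Fk chromatic}) that the worst induced subgraph for a vertex $w$ is the closed neighbourhood $G[\{w\}\cup N(w)]$; after that, both directions reduce to one-line computations of $d^{\ast}$ on vertex-deleted subgraphs. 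The forward direction then parallels the paper's template (delete a vertex, watch the parameter fail to drop), but your backward direction is structurally different: the paper establishes minimality by taking a hypothetical minimal forbidden induced subgraph $L\subseteq G$ and forcing $L=G$ through a counting argument on $|V(G)|-|V(L)|$, whereas you verify directly that $d^{\ast}_{G-u}(w)\leq k$ for every $u$ and $w$, with the clique-counting bound $|C\cap N(w)|\geq \deg_G(w)-k$ for a maximum clique $C$ avoiding $w$ doing the real work. Your version buys an explicit, checkable certificate for each deleted vertex and makes transparent exactly where Condition~\ref{second cond} is used (existence of a maximum clique avoiding any given $w\neq v$); the paper's version is shorter but leans on the hereditary structure more abstractly. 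All the individual estimates you state check out, including the monotonicity $d^{\ast}_{G-u}(w)\leq d^{\ast}_G(w)$ and the identities $\omega(G-v)=\omega(G)-1$ and $\omega(G-u)=\omega(G)$ for $u\neq v$ under Condition~\ref{second cond}.
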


Our results are primarily based on Theorem~\ref{charac of Fk chromatic}, 
whose analogism to Theorem~\ref{charac of Fk} is obvious.

\begin{theorem}\label{charac of Fk chromatic}
Let $G$ be a graph. 
$G \in F(\chi,k)$ if and only if the following conditions hold:
\begin{enumerate}
	\item $G$ has a unique dominating vertex $v$, \label{first condition chromatic}
	\item each color class in every optimal coloring of $G-v$ consists of at least two vertices, \label{second condition chromatic}
	\item $\Delta(G)=\chi(G)+k$. \label{third condition chromatic}	
\end{enumerate}

%In particular, $\Delta(G) = |V(G)|-1$ and $\omega(G)=|V(G)| - k-1$.
\end{theorem}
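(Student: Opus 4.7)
The plan is to prove both directions by systematically exploiting how $\Delta$ and $\chi$ change under single-vertex deletions, using that removing a dominating vertex drops both quantities by exactly~$1$ while removing any other vertex drops each by at most~$1$.

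For the forward direction I would assume $G \in F(\chi,k)$, so $\Delta(G) - \chi(G) \geq k$ while every proper induced subgraph $H$ satisfies $\Delta(H) - \chi(H) \leq k - 1$. I would first show that every maximum-degree vertex is dominating: if a max-degree vertex $u$ had a non-neighbor $w$, then $\Delta(G-w) \geq \Delta(G)$ and $\chi(G-w) \leq \chi(G)$ would place $G-w$ outside $\varUpsilon_k$, contradicting minimality; hence $\Delta(G) = |V(G)| - 1$. Uniqueness of the dominating vertex follows similarly: a second dominating vertex would survive in $G-v$ with degree $|V(G)|-2$ while $\chi(G-v) = \chi(G)-1$, yielding $\Delta(G-v) - \chi(G-v) \geq k$, again a contradiction. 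This gives condition~1, and the same deletion scheme applied to a non-dominating $w \neq v$ forces $\Delta(G) = \chi(G)+k$ (condition~3), since otherwise $\Delta(G-w) \geq |V(G)|-2$ and $\chi(G-w) \leq \chi(G)$ would violate minimality. For condition~2 I would argue by contradiction: if some optimal coloring of $G-v$ has a singleton class $\{u\}$, then $\chi((G-v)-u) \leq \chi(G-v) - 1$, and since $v$ still dominates $G-u$ I obtain $\chi(G-u) = \chi((G-v)-u) + 1 \leq \chi(G)-1$, while $\Delta(G-u) = |V(G)|-2 = \Delta(G)-1$; combining gives $\Delta(G-u) - \chi(G-u) \geq k$, contradicting $G-u \in \varUpsilon_k$.

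For the converse I would assume conditions 1--3. Condition~3 directly gives $G \notin \varUpsilon_k$, and the real content is showing that every proper induced subgraph $H$ lies in $\varUpsilon_k$. The key sublemma, used throughout, is that condition~2 implies $\chi((G-v)-u) = \chi(G-v)$ for every vertex $u$ of $G-v$: otherwise extending an optimal coloring of $(G-v)-u$ by a fresh color for $u$ would produce an optimal coloring of $G-v$ in which $\{u\}$ is singleton. From this, $|V((G-v)-u)| - \chi((G-v)-u) = k$, and since $|V| - \chi$ is non-increasing under vertex deletion, every strict induced subgraph $H'$ of $G-v$ satisfies $|V(H')| - \chi(H') \leq k$. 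Now I would split on whether $v \in V(H)$: if $v \in V(H)$, then $v$ dominates $H$, so $\Delta(H) = |V(H)|-1$ and $\chi(H) = \chi(H-v)+1$, and $\Delta(H)+1 \leq \chi(H)+k$ reduces to $|V(H-v)| - \chi(H-v) \leq k$, which is given by the sublemma applied to $H-v \subsetneq G-v$. For $v \notin V(H)$ with $H \subsetneq G-v$, the same sublemma yields $\Delta(H)+1 \leq |V(H)| \leq \chi(H)+k$; and for $H = G-v$ itself, uniqueness of $v$ forces $\Delta(G-v) \leq |V(G)|-3$, hence $\Delta(G-v)+1 \leq |V(G)|-2 = \chi(G-v)+k$.

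The main obstacle I expect is the equivalence underlying condition~2: one has to recognize that \emph{``every optimal coloring of $G-v$ avoids singleton color classes''} is precisely the statement \emph{``$\chi((G-v)-u) = \chi(G-v)$ for every vertex $u$ of $G-v$''}, and it is exactly this bidirectional passage that lets the proof move between the global inequality $\Delta - \chi = k$ on $G$ and the local color-class condition on $G-v$. Once this sublemma is in hand, the remaining work is essentially bookkeeping that respects the different roles of $v$, of the other vertices, and of the transition between $G$ and $G-v$.
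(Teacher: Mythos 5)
Your proof is correct, and while your forward direction matches the paper's in all essentials (both get the dominating vertex by showing a maximum-degree vertex cannot have a non-neighbor, get uniqueness and the no-singleton-class condition by deleting a vertex whose removal drops $\chi$, and pin down $\Delta(G)=\chi(G)+k$ by deleting a non-dominating vertex -- you merely reorder things so that Condition~3 does not rely on Condition~2), your converse takes a genuinely different route. The paper argues by minimal counterexample: it takes an induced subgraph $L \in F(\chi,k)$ of $G$, applies the already-established forward direction to $L$ to get $\chi(L)+k=|V(L)|-1$, and then a vertex-versus-color count forces $\chi(G)-\chi(L)=|V(G)|-|V(L)|$, so that the vertices of $V(G)\setminus V(L)$ would have to form singleton color classes in an optimal coloring, contradicting Condition~2. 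You instead verify $\Delta(H)+1\leq\chi(H)+k$ directly for every proper induced subgraph $H$, using the sublemma that Condition~2 is equivalent to $\chi\bigl((G-v)-u\bigr)=\chi(G-v)$ for all $u$, together with the monotonicity of $|V|-\chi$ under vertex deletion and a case split on whether $v\in V(H)$. Your version buys self-containedness (the converse does not need to re-invoke the forward characterization) and makes fully explicit a step the paper leaves rather terse, namely why the counting identity yields singleton color classes; the paper's version buys brevity by recycling the forward direction. Both are sound.
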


\begin{proof}
Let $G \in F(\chi,k)$. 
%We show that the three conditions hold. 
Since $G$ is a minimal forbidden induced subgraph, 
all induced subgraphs of $G$ are contained in $\varUpsilon_k$ except for $G$ itself. 
Thus $\Delta(G) > \chi(G) + k-1$.
Choose a vertex $v$ of maximum degree in $G$ and 
let $H$ be the graph induced in $G$ by the vertex set $\{v\} \cup N(v)$. 
Observe that $H \subseteq G$ is not in $\varUpsilon_k$, since $\Delta(H)=\Delta(G) > \chi(G)+k-1 \geq \chi(H) + k-1$. 
Hence, $H \cong G$ by minimality of $G$, and thus, $G$ contains a dominating vertex, namely~$v$. 
Assume there exists a vertex $x \in V(G) \setminus \{v\}$ such that  
$\chi(G-x) = \chi(G)-1$. 
This equality in particular holds if $x$ is a dominating vertex. 
Then 
\begin{equation}\label{mini unique chromatic}
\Delta(G-x) = \Delta(G)-1 \geq \chi(G) + k - 1 = \chi(G-x) + k.  
\end{equation}
Thus $G-x \not \in \varUpsilon_k$, contradicting the minimality of $G$. 
Hence, Conditions~\ref{first condition chromatic} and~\ref{second condition chromatic} follow. 
Let $x \in N(v)$. 
Due to Condition~\ref{first condition chromatic}, 
the degree of $x$ is at most $\Delta(G)-2$. 
Hence, $\Delta(G-x)=\Delta(G)-1$.  
Due to Condition~\ref{second condition chromatic}, 
$\chi(G-x)=\chi(G)$. 
Assume $\Delta(G)\geq\chi(G)+k+1$. 
Then
%\begin{equation*}
$\Delta(G-x) = \Delta(G) - 1 \geq (\chi(G) + k + 1) - 1 = \chi(G-x) + k.$ 
%\end{equation*}
That is, $G-x \not \in \varUpsilon_k$, a contradiction. 
Hence $\Delta(G) = \chi(G) + k$, and the third condition follows.

Let $G$ obey 
Conditions~\ref{first condition chromatic}, 
\ref{second condition chromatic} and \ref{third condition chromatic}.  
We have to prove that $G \in F(\chi,k)$. 
Since $$\Delta(G)=\chi(G)+k > \chi(G) + k - 1,$$ 
$G$ is a forbidden induced subgraph for every graph contained in $\varUpsilon_k$. 
To see that $G$ is minimal, %assume the opposite. 
% Then further graphs forbidden as induced subgraphs for the class $\varOmega_k$ are induced in $G$. 
let $L \in F(\chi,k)$ be an induced subgraph of $G$. 
%By assumption, $G \not = L$.
Observe that the graph induced by $(V(L)\setminus \{y\}) \cup \{v\}$ 
is isomorphic to $L$, hence without loss of generality, let $y=v$. 
Let $S=V(G)\setminus V(L)$ and recall that $\Delta(G)=V(G)-1$.  
Thus, 
\begin{equation*}
\chi(L)+k=
%\Delta(L)=
V(L)-1=
V(G)-|S|-1=
\Delta(G)-|S|=\chi(G)+k-|S|. 
\end{equation*} 
That is, 
$\chi(G)-\chi(L)=|S|= |V(G)|-|V(L)|$. 
In other words, 
every vertex in $V(G)\setminus V(L)$ coincides with its own color class in every optimal coloring of $G$, 
yielding $S=\emptyset$. 
Hence, $G=L$. 
\end{proof}

\begin{corollary}\label{newcoro}
Let $G \in F(\chi,k)$. 
Then $\Delta(G) = |V(G)|-1$ and $\chi(G)=|V(G)| - k - 1$.%  $\chi(G)-1 \leq \frac{\Delta(G)}{2} \leq k+1$.  
\end{corollary}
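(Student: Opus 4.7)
The plan is to read off both equalities directly from Theorem~\ref{charac of Fk chromatic}, using only Conditions~\ref{first condition chromatic} and~\ref{third condition chromatic}. First, I would invoke Condition~\ref{first condition chromatic} to obtain a (unique) dominating vertex $v$ of $G$. By the definition of a dominating vertex, $v$ is adjacent to each of the remaining $|V(G)|-1$ vertices, so $\deg(v) = |V(G)|-1$. Since trivially no vertex in a simple graph can have degree exceeding $|V(G)|-1$, this forces $\Delta(G) = |V(G)|-1$.

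Second, I would appeal to Condition~\ref{third condition chromatic}, which asserts $\Delta(G) = \chi(G) + k$. Substituting the value of $\Delta(G)$ just obtained gives $\chi(G) = \Delta(G) - k = |V(G)| - 1 - k$, as claimed. There is no real obstacle here: the corollary is just a bookkeeping consequence of the theorem, and the only thing to be careful about is to cite the correct two conditions (the second condition, on color classes of $G-v$, plays no role). Accordingly, the write-up should be just a couple of lines that chains these two equalities together.
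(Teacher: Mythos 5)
Your argument is correct and is essentially identical to the paper's own proof: both obtain $\Delta(G)=|V(G)|-1$ from the dominating vertex guaranteed by Condition~\ref{first condition chromatic} and then substitute into $\Delta(G)=\chi(G)+k$ from Condition~\ref{third condition chromatic}. Nothing further is needed.
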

\begin{proof}
If Conditions~\ref{first condition chromatic}, \ref{second condition chromatic} and~\ref{third condition chromatic} from Theorem~\ref{charac of Fk chromatic} hold for $G$, 
then the dominating vertex $v$ has maximum degree, thus $\Delta(G)=|V(G)|-1$.
By Condition~\ref{third condition chromatic}, $\Delta(G) = \chi(G) + k$, and therefore $\chi(G)=|V(G)|-k-1$.
\end{proof}

%Recall that a clique in a graph is a set of vertices of the graph that are pairwise adjacent.  
%A clique is \emph{maximal} in a graph if it is not contained in a clique of larger size. 
%A maximal clique that is of largest size in a graph $G$ is called a \emph{maximum clique} of the graph. 
%By $\omega(G)$ we denote the largest size of a maximum clique in a graph $G$. 

Our next result, Proposition~\ref{proposition Delta leq 2k+2 chromatic}, provides a bound in terms of $k$ on the order of the minimal forbidden induced subgraphs of $\varUpsilon_k$. 
By $K_n$ we denote the complete graph on $n$ vertices. 

%folgender Beweis getuned
\begin{proposition}\label{proposition Delta leq 2k+2 chromatic}
Let $G \in F(\chi,k)$. 
Then $2\chi(G) - 2 \leq \Delta(G) \leq 2k+2$.
Moreover, $2 \leq \chi(G) \leq k+2$.
%Then $k+2 \leq \Delta(G) \leq 2k+2$.
%Moreover, $\Delta(G)=k+2$ if and only if $G \cong K_{1,k+2}$. 
\end{proposition}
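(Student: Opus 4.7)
The plan is to use Theorem~\ref{charac of Fk chromatic} together with Corollary~\ref{newcoro}, and the crucial lever will be Condition~\ref{second condition chromatic}: that every color class in every optimal coloring of $G-v$ contains at least two vertices. Write $n = |V(G)|$. Since $v$ is dominating, it must receive its own color in every coloring of $G$, so $\chi(G-v) = \chi(G) - 1$. Condition~\ref{second condition chromatic} then gives $n - 1 = |V(G-v)| \geq 2\chi(G-v) = 2(\chi(G) - 1)$. Combined with $\Delta(G) = n - 1$ from Corollary~\ref{newcoro}, this yields $\Delta(G) \geq 2\chi(G) - 2$, which is the first inequality.

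For the upper bounds, I would plug $\chi(G) = n - k - 1$ (Corollary~\ref{newcoro}) into the inequality $n - 1 \geq 2\chi(G) - 2$ just obtained, producing $n \leq 2k + 3$. This instantly gives $\Delta(G) = n - 1 \leq 2k + 2$ and, via $\chi(G) = \Delta(G) - k$ from Condition~\ref{third condition chromatic}, the bound $\chi(G) \leq k + 2$.

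Finally, for $\chi(G) \geq 2$, I would argue that $G$ must contain an edge: otherwise $G$ is either a single vertex, which is ruled out because $\Delta(G) = \chi(G) + k$ would then force $k = -1$, or $G$ is edgeless with at least two vertices and so has no dominating vertex at all, contradicting Condition~\ref{first condition chromatic}. The presence of an edge gives $\chi(G) \geq 2$. The main obstacle here is really just spotting the right identity $\chi(G-v) = \chi(G) - 1$ at the dominating vertex; once that is in hand, everything else is straightforward algebraic manipulation of the relations supplied by the theorem and its corollary.
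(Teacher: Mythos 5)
Your proof is correct and follows essentially the same route as the paper: Corollary~\ref{newcoro} gives $|V(G)|=\Delta(G)+1$, Condition~\ref{second condition chromatic} applied to $G-v$ (with $\chi(G-v)=\chi(G)-1$) gives $\Delta(G)\geq 2\chi(G)-2$, and combining this with $\Delta(G)=\chi(G)+k$ yields the upper bounds. Your argument for $\chi(G)\geq 2$ is slightly more explicit than the paper's one-line observation that $K_1\in\varUpsilon_k$, but the content is the same.
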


\begin{proof}
Let $G \in F(\chi,k)$. 
Observe that $K_1 \in \varUpsilon_k$ for all $k \geq 0$, hence $\chi(G) \geq 2$.
By Corollary~\ref{newcoro}, $|V(G)|=\Delta(G)+1$. 
Thus, Theorem~\ref{charac of Fk chromatic}, Condition~\ref{second condition chromatic}, 
yields 
%\begin{eqnarray}\label{equ prop 1}
$\Delta(G) \geq 2(\chi(G)-1).$
%\end{eqnarray}
By Theorem~\ref{charac of Fk chromatic}, Condition~\ref{third condition chromatic}, 
$\chi(G)+k = \Delta(G)$ holds. 
Hence, $\chi(G)+k = \Delta(G) \geq 2(\chi(G)-1)$, 
leading to $\chi(G) \leq k+2$ and $\Delta(G) \leq 2k+2$. 
\end{proof}

Proposition~\ref{proposition Delta leq 2k+2 chromatic} has an important consequence: 
it yields a bound for the order of minimal forbidden induced subgraphs.
In other words, 
$F(\chi,k)$ is a subset of the set of graphs that have at most $2k+3$ vertices, 
and is therefore finite for every $k \in \mathbb{N}_0$.  

\begin{observation}
For every $k \in \mathbb{N}_0$, the set of minimal forbidden induced subgraphs of $\varUpsilon_k$ is finite. 
\end{observation}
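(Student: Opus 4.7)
The plan is to observe that this is essentially immediate from the bounds already established in Proposition~\ref{proposition Delta leq 2k+2 chromatic} combined with Corollary~\ref{newcoro}. Concretely, I would argue that every member of $F(\chi,k)$ has a uniformly bounded number of vertices (in terms of $k$ alone), and since there are only finitely many graphs on a bounded number of vertices up to isomorphism, the set $F(\chi,k)$ must be finite.

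First I would fix $k \in \mathbb{N}_0$ and take an arbitrary $G \in F(\chi,k)$. By Corollary~\ref{newcoro}, we have $|V(G)| = \Delta(G) + 1$. Then I would invoke Proposition~\ref{proposition Delta leq 2k+2 chromatic}, which gives $\Delta(G) \leq 2k+2$. Combining these two facts yields $|V(G)| \leq 2k+3$, a bound depending only on $k$.

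Since we identify isomorphic graphs (as the paper notes, this is legitimate because we work with graph invariants), the collection of graphs with at most $2k+3$ vertices is finite up to isomorphism. Therefore $F(\chi,k)$, being a subset of this finite collection, is finite. I do not expect any real obstacle here; the statement has been set up so that it follows directly from the preceding proposition. The only subtlety worth mentioning in the write-up is that one must interpret "the set of minimal forbidden induced subgraphs" up to isomorphism, which is the convention already established in the preliminaries of the paper.
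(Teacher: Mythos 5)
Your argument is correct and is precisely the paper's own reasoning: the text preceding the observation derives $|V(G)|\leq 2k+3$ from Proposition~\ref{proposition Delta leq 2k+2 chromatic} (together with $|V(G)|=\Delta(G)+1$ from Corollary~\ref{newcoro}) and concludes finiteness since only finitely many graphs on at most $2k+3$ vertices exist up to isomorphism. Nothing is missing.
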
 

Hence, the problem of recognition of these graphs can be solved in polynomial time. 
Note that for any fixed $k \in \mathbb{N}_0$, the set $F(\omega,k)$ is also finite (cf.~\cite{unserpaper}).  
More similarities become clear when comparing the sets $\varOmega_k$ and~$\varUpsilon_k$. 
Recall that the chromatic number is always at least as large as the clique number. 

\begin{observation}\label{observation upsilon part of omega fixed k}
For every $k \in \mathbb{N}_0$, $\varOmega_k\subseteq \varUpsilon_k $.
\end{observation}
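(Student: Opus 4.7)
The plan is to exploit the fundamental inequality $\omega(H) \leq \chi(H)$, which holds for every graph $H$: any clique of size $\omega(H)$ requires $\omega(H)$ distinct colors in any proper coloring. With this in hand, the observation reduces to a one-line chain of inequalities, so the main task is simply to set up the definitions correctly and apply the bound to every induced subgraph.

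Concretely, let $G \in \varOmega_k$ and let $H$ be an arbitrary induced subgraph of $G$. By definition of $\varOmega_k$, we have $\Delta(H) \leq \omega(H) + k - 1$. Combining this with $\omega(H) \leq \chi(H)$ yields $\Delta(H) \leq \chi(H) + k - 1$, which is exactly the defining condition for $H$ to satisfy the inequality required by $\varUpsilon_k$. Since this holds for every induced subgraph $H$ of $G$, we conclude $G \in \varUpsilon_k$, establishing the inclusion $\varOmega_k \subseteq \varUpsilon_k$.

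There is essentially no obstacle here; the argument is a direct substitution once $\omega \leq \chi$ is invoked. The only mild subtlety is remembering that both classes are defined in terms of \emph{all} induced subgraphs rather than the host graph alone, but this is handled uniformly by quantifying over $H$ in the argument above.
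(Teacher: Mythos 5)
Your proof is correct and matches the paper's (implicit) argument exactly: the paper justifies the observation by the single remark that the chromatic number is always at least the clique number, which is precisely the inequality $\omega(H)\leq\chi(H)$ you chain with the defining bound of $\varOmega_k$. Nothing is missing.
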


Observation~\ref{observation upsilon part of omega fixed k} does not necessarily imply $F(\chi,k) \subseteq F(\omega,k)$. 
Every graph in $F(\chi,k)$ is forbidden as a subgraph of a graph in $\varOmega_k$, 
but with regard to this property not necessarily minimal. 
However, as demonstrated by the next results, a graph in $F(\chi,k)$ 
is also in $F(\omega,k)$ 
if it is perfect. 
A graph $G$ is \textit{perfect} if  
for $G$ and all its induced subgraphs the clique number and the chromatic number coincide. 
The class of perfect graphs is among the best studied hereditary graph classes~(cf.~\cite{ Perfect.graphs,B.Toft,M.C.Golumbic}). 

\begin{lemma}\label{lemma perfect intersection coloring}
Let $G$ be a perfect graph. 
Then the intersection of all maximum cliques of $G$ is empty if and only if 
in every optimal coloring of $G$, 
every color class consists of at least~2 vertices. 
\end{lemma}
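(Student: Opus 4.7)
The plan is to prove both directions by contradiction, using the hereditary nature of perfectness (every induced subgraph of a perfect graph is perfect) to freely switch between $\omega$ and $\chi$.

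For the forward direction, I would assume that the intersection of all maximum cliques of $G$ is empty and suppose toward a contradiction that some optimal coloring has a singleton color class $\{v\}$. Removing $v$ yields a proper coloring of $G-v$ using $\chi(G)-1$ colors, so $\chi(G-v)\leq \chi(G)-1$. Since $G-v$ is perfect, this gives $\omega(G-v)=\chi(G-v)\leq \omega(G)-1$, meaning every maximum clique of $G$ must contain $v$. This places $v$ in the intersection of all maximum cliques, contradicting the assumption.

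For the backward direction, I would assume every color class in every optimal coloring of $G$ has at least two vertices and suppose toward a contradiction that some vertex $v$ lies in every maximum clique. Then $\omega(G-v)=\omega(G)-1$, and by perfectness of $G-v$ we obtain $\chi(G-v)=\omega(G)-1=\chi(G)-1$. Take any optimal coloring of $G-v$ and add $v$ with a fresh color; this is an optimal coloring of $G$ in which $\{v\}$ is a color class of size one, contradicting the assumption.

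Both directions are essentially symmetric and short. The only subtlety worth flagging is the invocation of perfectness on the induced subgraph $G-v$ rather than on $G$ itself, so I would make explicit that perfect graphs form a hereditary class, which allows the equalities $\chi(G-v)=\omega(G-v)$ used in each direction. No case analysis or structural decomposition is required; the argument rests entirely on the equality $\chi=\omega$ being preserved under vertex deletion.
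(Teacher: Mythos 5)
Your proof is correct and follows essentially the same route as the paper: both arguments reduce the statement to the equivalence between the existence of a vertex $x$ with $\omega(G-x)=\omega(G)-1$ and one with $\chi(G-x)=\chi(G)-1$, using that $G-x$ is again perfect. You merely spell out the two directions that the paper compresses into a chain of equivalences, and your explicit remark that perfectness is hereditary is exactly the point the paper relies on.
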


\begin{proof}
Let $G$ be a perfect graph. 
The intersection of all maximum cliques of $G$ is empty if and only if 
there exists no vertex $x \in V(G)$ such that $\omega(G-x)=\omega(G)-1$. 
This is equivalent to the statement that there exists no vertex $x \in V(G)$ such that $\chi(G-x)=\chi(G)-1$,  
since $G-x$ is also perfect. 
In other words, there is no vertex that forms its own color class in any optimal coloring of $G$.
\end{proof}

With this lemma, 
we can now formulate the following result. 

\begin{theorem}\label{theorem perfekt minforb}
Let $k \in \mathbb{N}_0$ and let $G$ be a perfect graph.
Then $G \in F(\omega,k)$ if and only if $G \in F(\chi,k)$. 
In particular, 
if all graphs in $F(\omega,k)$ are perfect, 
then $F(\omega,k)=F(\chi,k)$. 
\end{theorem}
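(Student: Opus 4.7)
The plan is to match up the three conditions of Theorem~\ref{charac of Fk} (characterizing $F(\omega,k)$) with the three conditions of Theorem~\ref{charac of Fk chromatic} (characterizing $F(\chi,k)$) under the assumption that $G$ is perfect. Condition~\ref{first cond} of Theorem~\ref{charac of Fk} is literally identical to Condition~\ref{first condition chromatic} of Theorem~\ref{charac of Fk chromatic}, namely the existence of a unique dominating vertex. Condition~\ref{third cond} reads $\Delta(G)=\omega(G)+k$ while Condition~\ref{third condition chromatic} reads $\Delta(G)=\chi(G)+k$; since $G$ is perfect, $\omega(G)=\chi(G)$ and these two equalities coincide.

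The main step is matching Condition~\ref{second cond} with Condition~\ref{second condition chromatic}, and this is where Lemma~\ref{lemma perfect intersection coloring} enters. I would first observe that when $v$ is a dominating vertex of $G$, the maximum cliques of $G$ are precisely the sets $\{v\}\cup C'$ with $C'$ ranging over the maximum cliques of $G-v$. Consequently, the intersection of all maximum cliques of $G$ equals $\{v\}$ if and only if the intersection of all maximum cliques of $G-v$ is empty. Since $G-v$ is an induced subgraph of the perfect graph $G$, it is itself perfect, so Lemma~\ref{lemma perfect intersection coloring} applied to $G-v$ translates the latter condition into the statement that every color class in every optimal coloring of $G-v$ consists of at least two vertices, which is exactly Condition~\ref{second condition chromatic}. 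Chaining the three equivalences yields the biconditional.

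For the \emph{in particular} clause, the inclusion $F(\omega,k)\subseteq F(\chi,k)$ is immediate from the first part, because by hypothesis every member of $F(\omega,k)$ is perfect. The reverse inclusion requires a short detour through Observation~\ref{observation upsilon part of omega fixed k}: given any $G\in F(\chi,k)$ we have $G\notin\varUpsilon_k\supseteq\varOmega_k$, so $G$ contains some induced subgraph $H$ that belongs to $F(\omega,k)$. By the perfectness hypothesis this $H$ is perfect, and the first part of the theorem then gives $H\in F(\chi,k)$; minimality of $G$ in $F(\chi,k)$ forces $H=G$, whence $G\in F(\omega,k)$.

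The only place where nontrivial bookkeeping is required is the clique-intersection correspondence in the middle paragraph, which is a direct consequence of the domination of $v$; I do not anticipate a genuine obstacle in the argument.
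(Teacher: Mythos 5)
Your proposal is correct and follows essentially the same route as the paper: both match the three conditions of Theorem~\ref{charac of Fk} against those of Theorem~\ref{charac of Fk chromatic}, using perfectness of $G$ for the first and third conditions and Lemma~\ref{lemma perfect intersection coloring} for the second. You are in fact slightly more careful than the paper, which applies the lemma to $G$ rather than to $G-v$ and states the \emph{in particular} clause without argument; your explicit translation between maximum cliques of $G$ and of $G-v$, and your minimality argument via Observation~\ref{observation upsilon part of omega fixed k}, fill both of those small gaps.
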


\begin{proof}
Let $k \in \mathbb{N}_0$ and let $G$ be a perfect graph.
Observe that $G \in F(\omega,k)$ if and only if $G$ meets Conditions~\ref{first cond}, \ref{second cond} and~\ref{third cond} of Theorem~\ref{charac of Fk}. 
Obviously, Condition~\ref{first cond} of Theorem~\ref{charac of Fk} 
and Condition~\ref{first condition chromatic} of Theorem~\ref{charac of Fk chromatic} 
coincide. 
By Lemma~\ref{lemma perfect intersection coloring}, 
$G$ obeys Condition~\ref{second cond} of Theorem~\ref{charac of Fk} 
if and only 
if $G$ obeys Condition~\ref{second condition chromatic} of Theorem~\ref{charac of Fk chromatic}, 
since $G$ is perfect. 
Finally, 
$\Delta(G)=\omega(G)+k$ is equivalent to $\Delta(G)=\chi(G)+1$, 
again due to perfectness of $G$. 
All in all, 
$G \in F(\omega,k)$ if and only if $G \in F(\chi,k)$. 
\end{proof}
% Let $F(\omega,k)$ be a subset of the set of perfect graphs. 
% By the previous result, $F(\omega,k)\subseteq F(\chi,k)$. 
% Let $H \in F(\chi,k)$. 
% If $\omega(H)<\chi(H)$, 
% then $\Delta(H)=\chi(H)+k > \omega(H)+k$, 
% hence $H$ is a forbidden subgraph for $\varOmega_k$. 
% Thus, 
% $H$ must contain a graph $F \in F(\omega,k)$ as induced subgraph. 
% Since $F$ is perfect, $F \in F(\chi,k)$, and thus, $F=H$. 
% 
% If $\omega(H)=\chi(H)$, 
% then $\Delta(H)=\omega(H)+k=\chi(H)+k$. 
% But then, $H$ obeys Conditions~\ref{first cond} and~\ref{third cond} of 
% Theorem~\ref{charac of Fk}. 
% Hence, 
% there exists a vertex $w \in V(H)$ that is not the dominating vertex and that is contained in the intersection of all maximum cliques of $H$. 
% Then $\chi(H-w)=\chi(H)$ and $\omega(H-w)=\omega(H)-1$. 
% Thus, $\Delta(H-w)=\omega(H-w)+k+1$, 
% therefore, $H-w$ is a forbidden subgraph for $\varOmega_k$. 
% In particular, 
% $H-w$ contains as a subgraph a graph in $F(\omega,k)$, say $F$. 
% By presumption, 
% $F$ is a perfect graph and hence, $F \in F(\chi,k)$. 
% In other words, 
% $F=H$. 
% That is, in both cases, $H \in F(\chi,k)$ implies $H \in F(\omega,k)$. 

We remark that the statement of Theorem~\ref{theorem perfekt minforb} 
can be described by $F(\omega,k)\cap PG = F(\chi,k) \cap PG$, 
where $PG$ denotes the class of perfect graphs. 
With Theorem~\ref{theorem perfekt minforb}, 
it is easy to show that the sets 
$\varOmega_k$ and $\varUpsilon_k$ are the same for $k=0$ and $k=1$. For $s\in \mathbb{N}$, let $P_s$ denote the path on $s$ vertices. 

\begin{theorem}\label{upsilon for 0}
$F(\chi,0)=\{P_3\}=F(\omega,0)$. That is, $\varUpsilon_0$ consists of unions of complete graphs.
\end{theorem}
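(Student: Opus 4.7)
The plan is to pin down the unique possible minimal forbidden subgraph via the numerical bounds from Proposition~\ref{proposition Delta leq 2k+2 chromatic} and Corollary~\ref{newcoro}, verify it is indeed $P_3$, and then deduce the $F(\omega,0)$ equality using Theorem~\ref{theorem perfekt minforb} together with the standard fact that $P_3$-free graphs are disjoint unions of complete graphs.

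First I would take an arbitrary $G\in F(\chi,0)$ and substitute $k=0$ into Proposition~\ref{proposition Delta leq 2k+2 chromatic}. The bounds $2\leq\chi(G)\leq k+2=2$ force $\chi(G)=2$, and combining $\Delta(G)\geq 2(\chi(G)-1)=2$ with $\Delta(G)\leq 2k+2=2$ forces $\Delta(G)=2$. Corollary~\ref{newcoro} then gives $|V(G)|=\Delta(G)+1=3$. Up to isomorphism there is only one graph on three vertices with chromatic number and maximum degree both equal to $2$, namely $P_3$. To confirm that $P_3$ actually belongs to $F(\chi,0)$, I would verify the three conditions of Theorem~\ref{charac of Fk chromatic}: the middle vertex is the unique dominating vertex of $P_3$; removing it leaves two non-adjacent vertices which form a single color class of size two in the unique optimal coloring; and $\Delta(P_3)=2=\chi(P_3)+0$.

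For the equality $F(\omega,0)=\{P_3\}$, I would first invoke Theorem~\ref{theorem perfekt minforb}: $P_3$ is perfect, so $P_3\in F(\chi,0)$ transports immediately to $P_3\in F(\omega,0)$. For the reverse inclusion, assume some $G\in F(\omega,0)$ with $G\ne P_3$. If $G$ contains $P_3$ as a (necessarily proper) induced subgraph, then since $P_3\notin\varUpsilon_0$, Observation~\ref{observation upsilon part of omega fixed k} gives $P_3\notin\varOmega_0$, which contradicts the minimality of $G$. Otherwise $G$ is $P_3$-free, and a standard shortest-path argument (any two non-adjacent vertices in a common component are the endpoints of some induced $P_3$) shows $G$ is a disjoint union of complete graphs; any such graph satisfies $\Delta=\omega-1$ and hence lies in $\varOmega_0$, contradicting $G\in F(\omega,0)$. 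The final assertion that $\varUpsilon_0$ consists of unions of complete graphs is then immediate, since $\varUpsilon_0$ is precisely the class of $F(\chi,0)$-free graphs, and $P_3$-free graphs are exactly the disjoint unions of cliques.

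The only mildly delicate point, and what I would flag as the main obstacle, is the reverse inclusion $F(\omega,0)\subseteq\{P_3\}$: Theorem~\ref{theorem perfekt minforb} cannot be used symmetrically here, because a priori one does not know that every element of $F(\omega,0)$ is perfect, so the minimality assumption has to be combined with the structural characterization of $P_3$-free graphs rather than with the perfect-graph transfer. Once this is sorted out, everything else is a direct bookkeeping exercise with the preceding theorems.
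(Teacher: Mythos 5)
Your proof is correct, but it takes a genuinely different route from the paper's. The paper's argument is two lines long: it imports $F(\omega,0)=\{P_3\}$ from Theorem~4 of~\cite{unserpaper}, notes that $P_3$ is perfect, and applies the ``in particular'' clause of Theorem~\ref{theorem perfekt minforb} to transfer the equality to $F(\chi,0)$. You go in the opposite direction: you first determine $F(\chi,0)$ intrinsically by specializing Proposition~\ref{proposition Delta leq 2k+2 chromatic} and Corollary~\ref{newcoro} to $k=0$ (which forces $\chi=\Delta=2$ and $|V|=3$, isolating $P_3$ as the only candidate) and then checking Conditions~\ref{first condition chromatic}--\ref{third condition chromatic} of Theorem~\ref{charac of Fk chromatic} for $P_3$; afterwards you re-derive $F(\omega,0)=\{P_3\}$ rather than citing it, using Theorem~\ref{theorem perfekt minforb} only for the inclusion $P_3\in F(\omega,0)$ and handling the reverse inclusion by minimality together with the structure of $P_3$-free graphs (disjoint unions of cliques satisfy $\Delta=\omega-1$ and so lie in $\varOmega_0$). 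Every step checks out, and your closing remark is well taken: the perfect-graph transfer is not symmetric a priori, so the containment $F(\omega,0)\subseteq\{P_3\}$ does need a separate argument of the kind you give. What your approach buys is self-containedness --- no reliance on the external reference and no use of the ``in particular'' clause of Theorem~\ref{theorem perfekt minforb}; what the paper's approach buys is brevity and uniformity with the treatment of $F(\chi,1)$ in Theorem~\ref{charac of F1 chromatic}.
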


\begin{proof}
By Theorem~4 in~\cite{unserpaper}, $F(\omega,0)=P_3$. 
In particular, $F(\omega,0)$ consists of perfect graphs only.  
Theorem \ref{theorem perfekt minforb} completes the proof. 
\end{proof}

Theorem~\ref{charac of F1 chromatic} states that also 
$F(\omega,1)$ and $F(\chi,1)$ coincide. 
By $W_r$, $r \in \mathbb{N}$, $r \geq 3$, 
we denote the cycle on $r$ vertices where a dominating vertex is added. \label{definition wheel} 
For the graphs in $F(\chi,1)$, cf.~Figure~\ref{figure F1 graphs}.

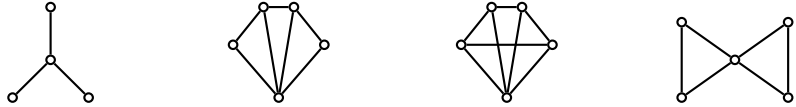
\begin{figure}[htb]
\begin{center}
\begin{pspicture}(11,2.5)(0,0)

  %claw
	  %Knoten
	  \cnode(1.0,1.0){0.07}{ZK} %Zentrum
	  \cnode(1.0,1.7){0.07}{KO}
	  \cnode(0.5,0.5){0.07}{KL}
	  \cnode(1.5,0.5){0.07}{KR} 
	  
	  %Kanten
 	  \ncline[linestyle=solid]{ZK}{KO}
 	  \ncline[linestyle=solid]{ZK}{KL}
 	  \ncline[linestyle=solid]{ZK}{KR}
	    
    %Luecke zwischen 2 und 3
    %gem
        %Knoten
	  \cnode(4.0,0.5){0.07}{GZ} %Zentrum
	  \cnode(3.4,1.2){0.07}{GL}
	  \cnode(3.8,1.7){0.07}{GLM}
	  \cnode(4.2,1.7){0.07}{GRM} 
	  \cnode(4.6,1.2){0.07}{GR} 
	
	 %Kanten
 	  \ncline[linestyle=solid]{GZ}{GL}%Zentrumskanten
 	  \ncline[linestyle=solid]{GZ}{GLM}
 	  \ncline[linestyle=solid]{GZ}{GRM}
 	  \ncline[linestyle=solid]{GZ}{GR}
	    
	  %c4-Kanten des gem  
	  \ncline[linestyle=solid]{GL}{GLM}
	  \ncline[linestyle=solid]{GLM}{GRM}
	  \ncline[linestyle=solid]{GRM}{GR}

    %Luecke zwischen 5 und 6
    %W4
        %Knoten
	  \cnode(7.0,0.5){0.07}{WZ} %Zentrum
	  \cnode(6.4,1.2){0.07}{WL}
	  \cnode(6.8,1.7){0.07}{WLM}
	  \cnode(7.2,1.7){0.07}{WRM} 
	  \cnode(7.6,1.2){0.07}{WR} 
	
	 %Kanten
 	  \ncline[linestyle=solid]{WZ}{WL}%Zentrumskanten
 	  \ncline[linestyle=solid]{WZ}{WLM}
 	  \ncline[linestyle=solid]{WZ}{WRM}
 	  \ncline[linestyle=solid]{WZ}{WR}
	    
	  %c4-Kanten des gem  
	  \ncline[linestyle=solid]{WL}{WLM}
	  \ncline[linestyle=solid]{WLM}{WRM}
	  \ncline[linestyle=solid]{WRM}{WR}
	  \ncline[linestyle=solid]{WR}{WL}
	    
    %Luecke zwischen 8 und 9
    %butterfly
        %Knoten
	  \cnode(10.0,1.0){0.07}{BZ} %Zentrum
	  \cnode(9.3,1.5){0.07}{BLO}
	  \cnode(9.3,0.5){0.07}{BLU}
	  \cnode(10.7,1.5){0.07}{BRO} 
	  \cnode(10.7,0.5){0.07}{BRU} 
	
	 %Kanten
 	  \ncline[linestyle=solid]{BZ}{BLO}%Zentrumskanten
 	  \ncline[linestyle=solid]{BZ}{BLU}
 	  \ncline[linestyle=solid]{BZ}{BRO}
 	  \ncline[linestyle=solid]{BZ}{BRU}
	    
	  %Kanten des butterfly  
	  \ncline[linestyle=solid]{BLO}{BLU}
	  \ncline[linestyle=solid]{BRO}{BRU}

\end{pspicture}
\end{center}
\caption{The graphs claw, gem, $W_4$, butterfly.}%caption end
\label{figure F1 graphs}
\end{figure}

\begin{theorem}\label{charac of F1 chromatic}
$F(\chi,1) = \{\mbox{claw, gem, $W_4$, butterfly}\}=F(\omega,1)$. 
%In particular, $\varUpsilon_1 = \varOmega_1$.
\end{theorem}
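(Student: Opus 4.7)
The plan is to reduce the statement to Theorem~\ref{theorem perfekt minforb}. First I would invoke the companion result from \cite{unserpaper} that $F(\omega,1)=\{\mbox{claw, gem, }W_4,\mbox{ butterfly}\}$. By the second clause of Theorem~\ref{theorem perfekt minforb}, once we know that every graph in $F(\omega,1)$ is perfect, we immediately obtain $F(\omega,1)=F(\chi,1)$, and the whole statement of Theorem~\ref{charac of F1 chromatic} follows in one line.

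So the remaining task is purely a perfectness check for four small, concrete graphs. The claw $K_{1,3}$ is bipartite, hence perfect. The butterfly is the union of two triangles meeting in a single vertex, which is chordal (no induced cycle of length at least four), hence perfect. The gem is a $P_4$ with a dominating vertex added; a direct inspection of its induced cycles shows it is chordal as well, hence perfect. The only graph in the list that is not chordal is $W_4$, because it contains an induced $C_4$. For $W_4$ I would verify perfectness by hand: $\omega(W_4)=\chi(W_4)=3$, and every proper induced subgraph of $W_4$ either omits the hub (leaving an induced subgraph of $C_4$, which is bipartite) or keeps the hub with at most three rim vertices (yielding a chordal graph, easily colored with $\omega$ colors). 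Alternatively, one notes that $W_4$ contains no induced odd hole or odd antihole and appeals to the strong perfect graph theorem.

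Once all four graphs are recognized as perfect, Theorem~\ref{theorem perfekt minforb} (applied to each element of $F(\omega,1)$ in turn, together with the trivial containment direction) gives both inclusions $F(\omega,1)\subseteq F(\chi,1)$ and $F(\chi,1)\subseteq F(\omega,1)$. The main, and essentially only, obstacle is the verification that $W_4$ is perfect; everything else is either bipartite or chordal and therefore immediate. No new structural argument about $\chi$ is required, since all the work has already been done in Theorem~\ref{charac of Fk chromatic} and Theorem~\ref{theorem perfekt minforb}.
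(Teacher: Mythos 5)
Your proposal is correct and follows essentially the same route as the paper: quote Theorem~5 of~\cite{unserpaper} for $F(\omega,1)=\{\mbox{claw, gem, $W_4$, butterfly}\}$, observe that all four graphs are perfect, and conclude via Theorem~\ref{theorem perfekt minforb}. The only difference is that you spell out the perfectness check (bipartite/chordal arguments plus the direct verification for $W_4$), which the paper leaves implicit.
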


\begin{proof}%[Proof of Theorem \ref{Fchi0}]
By Theorem~5 in~\cite{unserpaper}, $F(\omega,1) = \{\mbox{claw, gem, $W_4$, butterfly}\}$. 
Hence all graphs in $F(\omega,1)$ are perfect graphs. 
Theorem~\ref{theorem perfekt minforb} completes the proof. 
\end{proof}

The union of two graphs $G$ and $H$ is denoted by $G \cup H$. 
For $s\in \mathbb{N}$, let $C_s$ denote the cycle on $s$ vertices.
In order to compare the sets $F(\omega,2)$ and $F(\chi,2)$, 
we restate Theorem~6 of~\cite{unserpaper}, 
in a slightly adapted version that is based on the observation 
that every $K_3$-free supergraph of $K_2 \cup K_2 \cup K_1$ 
on five vertices is either a subgraph of $K_{2,3}$ or is the $C_5$-graph. 
The $K_6-3e$ is the $K_3$ where three pairwise non-incident edges are removed (it corresponds to the last graph in the last row of Figure~\ref{figure F2 graphs chromatic}
when the dominating vertex is removed). 
If we say subgraph respectively supergraph we allow both edges and vertices to be removed respectively added to the host graph.

\begin{theorem}[\cite{unserpaper}]\label{charac of F2}
Let $G$ be a graph. 
$G \in F(\omega,2)$ if and only if $G$ contains a dominating vertex $v$ and one of the following holds:
\begin{enumerate}
	\item $G-v \cong \overline{K_4}$, \label{Thm 6 of unserpaper K4}
	\item $G-v$ is a supergraph of $K_2 \cup K_2 \cup K_1$ and a subgraph of $K_{2,3}$,\label{Thm 6 unserpaper nicht C5}
	\item $G-v \cong C_5$, \label{Thm 6 of unserpaper C5}
	\item $G-v \cong S_3$, \label{Thm 6 of unserpaper S3}
\item $G-v$ is a supergraph of $K_3 \cup K_3$ and a subgraph of $K_{6}-3e$.\label{Thm 6 of unserpaper subsuper} 
	\end{enumerate}		
%\end{enumerate}
\end{theorem}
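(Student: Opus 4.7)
The plan is to derive Theorem~\ref{charac of F2} directly from Theorem~6 of~\cite{unserpaper}, which characterizes $F(\omega,2)$ in a slightly coarser form: in the cited theorem, the present cases~\ref{Thm 6 unserpaper nicht C5} and~\ref{Thm 6 of unserpaper C5} are merged into the single clause that $G-v$ be a $K_3$-free supergraph of $K_2 \cup K_2 \cup K_1$ on five vertices. Cases~\ref{Thm 6 of unserpaper K4}, \ref{Thm 6 of unserpaper S3}, and~\ref{Thm 6 of unserpaper subsuper} appear verbatim in the cited result, so no separate argument would be needed for them.

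The only remaining work is to justify the observation stated in the paragraph preceding the theorem: every $K_3$-free supergraph $H$ of $K_2 \cup K_2 \cup K_1$ on five vertices is either a subgraph of $K_{2,3}$ or is isomorphic to $C_5$. I would handle this via a bipartite / non-bipartite case distinction. If $H$ is bipartite, its bipartition has parts of sizes $(1,4)$ or $(2,3)$; the case $(1,4)$ forces $H$ to be a subgraph of the star $K_{1,4}$, whose matching number is at most one, and which therefore cannot be a supergraph of $K_2 \cup K_2 \cup K_1$. Hence the bipartition has sizes $(2,3)$, giving $H\subseteq K_{2,3}$ and placing us in case~\ref{Thm 6 unserpaper nicht C5}. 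If $H$ is not bipartite, it contains an odd cycle; since $H$ is triangle-free and has only five vertices, this cycle must be a spanning $C_5$, and any chord of it would form a triangle with two adjacent cycle edges, so $H=C_5$, which is case~\ref{Thm 6 of unserpaper C5}.

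The only step that requires any thought is this bipartite / non-bipartite dichotomy, and it is routine. The main obstacle is therefore only the bookkeeping of checking that the refined cases~\ref{Thm 6 unserpaper nicht C5} and~\ref{Thm 6 of unserpaper C5} together cover exactly the merged clause of Theorem~6 of~\cite{unserpaper}, with no gap or overlap; once the observation above is in hand, there is no hidden difficulty.
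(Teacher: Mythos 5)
Your proposal is correct and takes essentially the same route as the paper: the paper gives no proof of Theorem~\ref{charac of F2}, simply restating Theorem~6 of~\cite{unserpaper} together with the (unproved) observation that every $K_3$-free supergraph of $K_2\cup K_2\cup K_1$ on five vertices is a subgraph of $K_{2,3}$ or is isomorphic to $C_5$. Your bipartite/non-bipartite case analysis is a correct justification of exactly that observation, so you have in fact supplied slightly more detail than the paper does.
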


All graphs contained in $F(\chi,2)$ are 
shown in Figure~\ref{figure F2 graphs chromatic}.
With $C_5^{(3)}$ and $C_5^{(4)}$ we denote the graphs 
that correspond to a $C_5$ with a $K_1$ attached to three respectively four consecutive vertices of the $C_5$. 
Both graphs, drawn with a dominating vertex, 
can be found in Figure~\ref{figure F2 graphs chromatic}, 
namely the last two graphs in the second row.

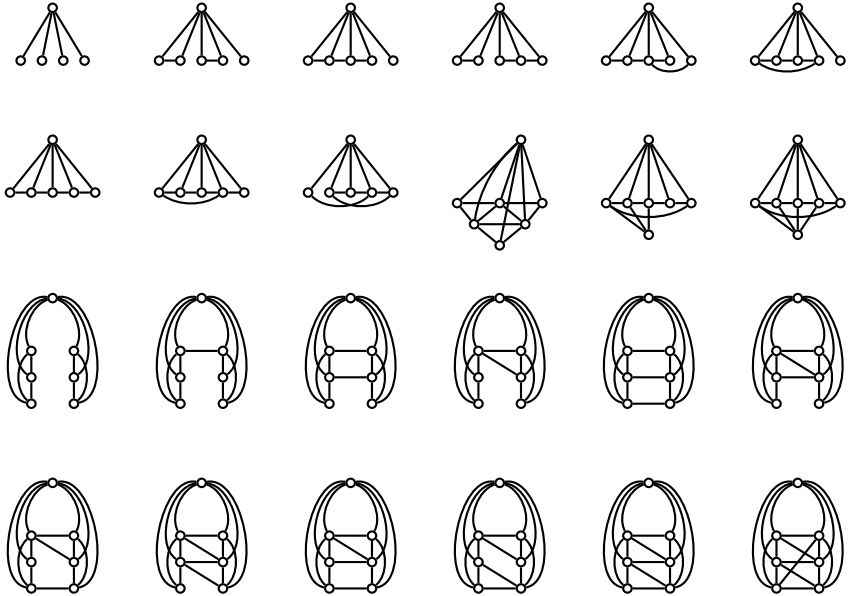
\begin{figure}[hbt]
\psset{xunit=0.28cm}
\psset{yunit=0.7cm}
\begin{pspicture}(40,12)%(0,0)
 %Graph 11 (0-ex7)
	  \cnode(3.0,11.4){0.07cm}{Z11} %Zentrum
	  \cnode(1.5,10.4){0.07cm}{L11}
	  \cnode(2.5,10.4){0.07cm}{ML11}
	  \cnode(3.5,10.4){0.07cm}{MR11} 
	  \cnode(4.5,10.4){0.07cm}{R11} 
	  
	  %Zentrumskanten
	  \ncline[linestyle=solid]{Z11}{L11}
	  \ncline[linestyle=solid]{Z11}{ML11}
	  \ncline[linestyle=solid]{Z11}{MR11}
	  \ncline[linestyle=solid]{Z11}{R11}

%Graph 12 (7-ex14)
	  \cnode(10.0,11.4){0.07cm}{Z12} %Zentrum
	  %Unten
	  \cnode(8.0,10.4){0.07cm}{L12}
	  \cnode(9.0,10.4){0.07cm}{ML12}
	  \cnode(10.0,10.4){0.07cm}{M12} 
	  \cnode(11.0,10.4){0.07cm}{MR12} 
	  \cnode(12.0,10.4){0.07cm}{R12}
	  
	  %Zentrumskanten
	  \ncline[linestyle=solid]{Z12}{L12}
	  \ncline[linestyle=solid]{Z12}{ML12}
	  \ncline[linestyle=solid]{Z12}{M12}
	  \ncline[linestyle=solid]{Z12}{MR12}
	  \ncline[linestyle=solid]{Z12}{R12}

	  %Untenkanten
	  \ncline[linestyle=solid]{L12}{ML12}
	  \ncline[linestyle=solid]{M12}{MR12}

%Graph 13 (14-ex21)
	  \cnode(17.0,11.4){0.07cm}{Z13} %Zentrum
	  %Unten
	  \cnode(15.0,10.4){0.07cm}{L13}
	  \cnode(16.0,10.4){0.07cm}{ML13}
	  \cnode(17.0,10.4){0.07cm}{M13} 
	  \cnode(18.0,10.4){0.07cm}{MR13} 
	  \cnode(19.0,10.4){0.07cm}{R13}
	  
	  %Zentrumskanten
	  \ncline[linestyle=solid]{Z13}{L13}
	  \ncline[linestyle=solid]{Z13}{ML13}
	  \ncline[linestyle=solid]{Z13}{M13}
	  \ncline[linestyle=solid]{Z13}{MR13}
	  \ncline[linestyle=solid]{Z13}{R13}

	  %Untenkanten
	  \ncline[linestyle=solid]{L13}{ML13}
	  \ncline[linestyle=solid]{ML13}{M13}
	  \ncline[linestyle=solid]{M13}{MR13}

%Graph 14 (21-ex28)
	  \cnode(24.0,11.4){0.07cm}{Z14} %Zentrum
	  %Unten
	  \cnode(22.0,10.4){0.07cm}{L14}
	  \cnode(23.0,10.4){0.07cm}{ML14}
	  \cnode(24.0,10.4){0.07cm}{M14} 
	  \cnode(25.0,10.4){0.07cm}{MR14} 
	  \cnode(26.0,10.4){0.07cm}{R14}
	  
	  %Zentrumskanten
	  \ncline[linestyle=solid]{Z14}{L14}
	  \ncline[linestyle=solid]{Z14}{ML14}
	  \ncline[linestyle=solid]{Z14}{M14}
	  \ncline[linestyle=solid]{Z14}{MR14}
	  \ncline[linestyle=solid]{Z14}{R14}

	  %Untenkanten
	  \ncline[linestyle=solid]{L14}{ML14}
	  \ncline[linestyle=solid]{M14}{MR14}
	  \ncline[linestyle=solid]{MR14}{R14}
	  
%Graph 15 (28-ex35)
	  \cnode(31.0,11.4){0.07cm}{Z15} %Zentrum
	  %Unten
	  \cnode(29.0,10.4){0.07cm}{L15}
	  \cnode(30.0,10.4){0.07cm}{ML15}
	  \cnode(31.0,10.4){0.07cm}{M15} 
	  \cnode(32.0,10.4){0.07cm}{MR15} 
	  \cnode(33.0,10.4){0.07cm}{R15}
	  
	  %Zentrumskanten
	  \ncline[linestyle=solid]{Z15}{L15}
	  \ncline[linestyle=solid]{Z15}{ML15}
	  \ncline[linestyle=solid]{Z15}{M15}
	  \ncline[linestyle=solid]{Z15}{MR15}
	  \ncline[linestyle=solid]{Z15}{R15}

	  %Untenkanten
	  \ncline[linestyle=solid]{L15}{ML15}
	  \ncline[linestyle=solid]{ML15}{M15}
	  \ncline[linestyle=solid]{M15}{MR15}
	  \ncarc[arcangle=-50]{-}{M15}{R15}

 %Graph 16 (35-ex42)
	  \cnode(38.0,11.4){0.07cm}{Z16} %Zentrum
	  %Unten
	  \cnode(36.0,10.4){0.07cm}{L16}
	  \cnode(37.0,10.4){0.07cm}{ML16}
	  \cnode(38.0,10.4){0.07cm}{M16} 
	  \cnode(39.0,10.4){0.07cm}{MR16} 
	  \cnode(40.0,10.4){0.07cm}{R16}
	  
	  %Zentrumskanten
	  \ncline[linestyle=solid]{Z16}{L16}
	  \ncline[linestyle=solid]{Z16}{ML16}
	  \ncline[linestyle=solid]{Z16}{M16}
	  \ncline[linestyle=solid]{Z16}{MR16}
	  \ncline[linestyle=solid]{Z16}{R16}

	  %Untenkanten
	  \ncline[linestyle=solid]{L16}{ML16}
	  \ncline[linestyle=solid]{ML16}{M16}
	  \ncline[linestyle=solid]{M16}{MR16}
	  \ncarc[arcangle=-35]{-}{L16}{MR16}

%Graph 21 (0-ex7)
	  \cnode(3.0,8.9){0.07cm}{Z22} %Zentrum
	  %Unten
	  \cnode(1.0,7.9){0.07cm}{L22}
	  \cnode(2.0,7.9){0.07cm}{ML22}
	  \cnode(3.0,7.9){0.07cm}{M22} 
	  \cnode(4.0,7.9){0.07cm}{MR22} 
	  \cnode(5.0,7.9){0.07cm}{R22}
	  
	  %Zentrumskanten
	  \ncline[linestyle=solid]{Z22}{L22}
	  \ncline[linestyle=solid]{Z22}{ML22}
	  \ncline[linestyle=solid]{Z22}{M22}
	  \ncline[linestyle=solid]{Z22}{MR22}
	  \ncline[linestyle=solid]{Z22}{R22}

	  %Untenkanten
	  \ncline[linestyle=solid]{L22}{ML22}
	  \ncline[linestyle=solid]{ML22}{M22}
	  \ncline[linestyle=solid]{M22}{MR22}
	  \ncline[linestyle=solid]{MR22}{R22}
	  
%%Graph 22 (7-ex14)
%	  \cnode(10.0,8.9){0.07cm}{Z22} %Zentrum
%	  %Unten
%	  \cnode(8.0,7.9){0.07cm}{L22}
%	  \cnode(9.0,7.9){0.07cm}{ML22}
%	  \cnode(10.0,7.9){0.07cm}{M22} 
%	  \cnode(11.0,7.9){0.07cm}{MR22} 
%	  \cnode(12.0,7.9){0.07cm}{R22}
%	  
%	  %Zentrumskanten
%	  \ncline[linestyle=solid]{Z22}{L22}
%	  \ncline[linestyle=solid]{Z22}{ML22}
%	  \ncline[linestyle=solid]{Z22}{M22}
%	  \ncline[linestyle=solid]{Z22}{MR22}
%	  \ncline[linestyle=solid]{Z22}{R22}
%
%	  %Untenkanten
%	  \ncline[linestyle=solid]{L22}{ML22}
%	  \ncline[linestyle=solid]{ML22}{M22}
%	  \ncline[linestyle=solid]{M22}{MR22}
%	  \ncline[linestyle=solid]{MR22}{R22}
%	  \ncarc[arcangle=-35]{-}{L22}{R22}

%Graph 22 (14-ex21)
	  \cnode(10.0,8.9){0.07cm}{Z22} %Zentrum
	  %Unten
	  \cnode(8.0,7.9){0.07cm}{L22}
	  \cnode(9.0,7.9){0.07cm}{ML22}
	  \cnode(10.0,7.9){0.07cm}{M22} 
	  \cnode(11.0,7.9){0.07cm}{MR22} 
	  \cnode(12.0,7.9){0.07cm}{R22}
	  
	  %Zentrumskanten
	  \ncline[linestyle=solid]{Z22}{L22}
	  \ncline[linestyle=solid]{Z22}{ML22}
	  \ncline[linestyle=solid]{Z22}{M22}
	  \ncline[linestyle=solid]{Z22}{MR22}
	  \ncline[linestyle=solid]{Z22}{R22}

	  %Untenkanten
	  \ncline[linestyle=solid]{L22}{ML22}
	  \ncline[linestyle=solid]{ML22}{M22}
	  \ncline[linestyle=solid]{M22}{MR22}
	  \ncline[linestyle=solid]{MR22}{R22}
	  \ncarc[arcangle=-35]{-}{L22}{MR22}

%Graph 23 (21-ex28)
	  \cnode(17.0,8.9){0.07cm}{Z23} %Zentrum
	  %Unten
	  \cnode(15.0,7.9){0.07cm}{L23}
	  \cnode(16.0,7.9){0.07cm}{ML23}
	  \cnode(17.0,7.9){0.07cm}{M23} 
	  \cnode(18.0,7.9){0.07cm}{MR23} 
	  \cnode(19.0,7.9){0.07cm}{R23}
	  
	  %Zentrumskanten
	  \ncline[linestyle=solid]{Z23}{L23}
	  \ncline[linestyle=solid]{Z23}{ML23}
	  \ncline[linestyle=solid]{Z23}{M23}
	  \ncline[linestyle=solid]{Z23}{MR23}
	  \ncline[linestyle=solid]{Z23}{R23}

	  %Untenkanten
% 	  \ncline[linestyle=solid]{L23}{ML23}
	  \ncline[linestyle=solid]{ML23}{M23}
	  \ncline[linestyle=solid]{M23}{MR23}
	  \ncline[linestyle=solid]{MR23}{R23}
	  \ncarc[arcangle=-45]{-}{L23}{MR23}
	  \ncarc[arcangle=-45]{-}{ML23}{R23}
	  
%Graph 24 (28-ex35)
	  \cnode(25.0,8.9){0.07cm}{Z24}
	  
	  %S3-Knoten
	  \cnode(22.0,7.7){0.07cm}{OL24} %oberste Reihe
	  \cnode(24.0,7.7){0.07cm}{OM24}
	  \cnode(26.0,7.7){0.07cm}{OR24}
	  \cnode(22.8,7.3){0.07cm}{ML24} %mittlere Reihe
	  \cnode(25.2,7.3){0.07cm}{MR24} 
	  \cnode(24.0,6.9){0.07cm}{U24} %unterste Reihe
	  
	  %Dominierende Kanten
	  \ncline[linestyle=solid]{Z24}{OL24}
	  \ncline[linestyle=solid]{Z24}{OM24}
	  \ncline[linestyle=solid]{Z24}{OR24}
	  %\ncline[linestyle=solid]{Z24}{ML24}
	  \ncline[linestyle=solid]{Z24}{MR24}
	  \ncline[linestyle=solid]{Z24}{U24}
	  
	  %C6-Kanten
	  \ncline[linestyle=solid]{OL24}{OM24}
	  \ncline[linestyle=solid]{OM24}{OR24}
	  \ncline[linestyle=solid]{OR24}{MR24}
	  \ncline[linestyle=solid]{MR24}{U24}
	  \ncline[linestyle=solid]{U24}{ML24}
	  \ncline[linestyle=solid]{ML24}{OL24}
	  
	  %innere Kanten
	  \ncline[linestyle=solid]{OM24}{ML24}
	  \ncline[linestyle=solid]{OM24}{MR24}
	  \ncline[linestyle=solid]{MR24}{ML24}
	  \ncarc[arcangle=20]{-}{ML24}{Z24}

%Graph 25 (35-ex42)
	  \cnode(31.0,8.9){0.07cm}{Z25}

	  %C5 und u -Knoten
	  \cnode(29.0,7.7){0.07cm}{L25} %oberste Reihe C5
	  \cnode(30.0,7.7){0.07cm}{LM25}
	  \cnode(31.0,7.7){0.07cm}{M25}
	  \cnode(32.0,7.7){0.07cm}{RM25} 
	  \cnode(33.0,7.7){0.07cm}{R25} 
	  \cnode(31.0,7.1){0.07cm}{U25} %unterer Knoten

	  %Dominierende Kanten
	  \ncline[linestyle=solid]{Z25}{L25}
	  \ncline[linestyle=solid]{Z25}{LM25}
	  \ncline[linestyle=solid]{Z25}{M25}
	  %\ncline[linestyle=solid]{Z24}{ML24}
	  \ncline[linestyle=solid]{Z25}{R25}
	  \ncline[linestyle=solid]{Z25}{RM25}

	  %Untenkanten
	  \ncline[linestyle=solid]{L25}{LM25}
	  \ncline[linestyle=solid]{LM25}{M25}
	  \ncline[linestyle=solid]{M25}{RM25}
	  \ncline[linestyle=solid]{RM25}{R25}
	  \ncarc[arcangle=-35]{-}{L25}{R25}

	 %Unterer Knoten zu mittlere Reihe
	  \ncline[linestyle=solid]{L25}{U25}
	  \ncline[linestyle=solid]{LM25}{U25}
	  \ncline[linestyle=solid]{M25}{U25}

%Graph 26 (35-ex42)
	  \cnode(38.0,8.9){0.07cm}{Z26}

	  %C5 und u -Knoten
	  \cnode(36.0,7.7){0.07cm}{L26} %oberste Reihe C5
	  \cnode(37.0,7.7){0.07cm}{LM26}
	  \cnode(38.0,7.7){0.07cm}{M26}
	  \cnode(39.0,7.7){0.07cm}{RM26} 
	  \cnode(40.0,7.7){0.07cm}{R26} 
	  \cnode(38.0,7.1){0.07cm}{U26} %unterer Knoten
	  
	  %Dominierende Kanten
	  \ncline[linestyle=solid]{Z26}{L26}
	  \ncline[linestyle=solid]{Z26}{LM26}
	  \ncline[linestyle=solid]{Z26}{M26}
	  %\ncline[linestyle=solid]{Z24}{ML24}
	  \ncline[linestyle=solid]{Z26}{R26}
	  \ncline[linestyle=solid]{Z26}{RM26}

	  %Untenkanten
	  \ncline[linestyle=solid]{L26}{LM26}
	  \ncline[linestyle=solid]{LM26}{M26}
	  \ncline[linestyle=solid]{M26}{RM26}
	  \ncline[linestyle=solid]{RM26}{R26}
	  \ncarc[arcangle=-35]{-}{L26}{R26}

	 %Unterer Knoten zu mittlere Reihezoo
	  \ncline[linestyle=solid]{L26}{U26}
	  \ncline[linestyle=solid]{LM26}{U26}
	  \ncline[linestyle=solid]{M26}{U26}
	  \ncline[linestyle=solid]{RM26}{U26}

%Graph 31 (0-ex7)
	  \cnode(3.0,5.9){0.07cm}{Z31} %Zentrum
	  %Unten
	  \cnode(2.0,4.9){0.07cm}{LO31}
	  \cnode(2.0,4.4){0.07cm}{LM31}
	  \cnode(2.0,3.9){0.07cm}{LU31} 
	  
	  \cnode(4.0,4.9){0.07cm}{RO31} 
	  \cnode(4.0,4.4){0.07cm}{RM31}
	  \cnode(4.0,3.9){0.07cm}{RU31}
	  
	  %Zentrumskanten
	  \ncarc[arcangle=-90]{-}{Z31}{LU31}
	  \ncarc[arcangle=-70]{-}{Z31}{LM31}
	  \ncarc[arcangle=-50]{-}{Z31}{LO31}

	  \ncarc[arcangle=90]{-}{Z31}{RU31}
	  \ncarc[arcangle=70]{-}{Z31}{RM31}
	  \ncarc[arcangle=50]{-}{Z31}{RO31}
	  
	  %Untenkanten
	  \ncline[linestyle=solid]{LO31}{LM31}
	  \ncline[linestyle=solid]{LU31}{LM31}
	  \ncarc[arcangle=-50]{-}{LO31}{LU31}
	  
	  \ncline[linestyle=solid]{RO31}{RM31}
	  \ncline[linestyle=solid]{RU31}{RM31}
	  \ncarc[arcangle=50]{-}{RO31}{RU31}
	
%Graph 32 (0-ex7)
	  \cnode(10.0,5.9){0.07cm}{Z32} %Zentrum
	  %Unten
	  \cnode(9.0,4.9){0.07cm}{LO32}
	  \cnode(9.0,4.4){0.07cm}{LM32}
	  \cnode(9.0,3.9){0.07cm}{LU32} 
	  
	  \cnode(11.0,4.9){0.07cm}{RO32} 
	  \cnode(11.0,4.4){0.07cm}{RM32}
	  \cnode(11.0,3.9){0.07cm}{RU32}
	  
	  %Zentrumskanten
	  \ncarc[arcangle=-90]{-}{Z32}{LU32}
	  \ncarc[arcangle=-70]{-}{Z32}{LM32}
	  \ncarc[arcangle=-50]{-}{Z32}{LO32}

	  \ncarc[arcangle=90]{-}{Z32}{RU32}
	  \ncarc[arcangle=70]{-}{Z32}{RM32}
	  \ncarc[arcangle=50]{-}{Z32}{RO32}
	  
	  %Untenkanten
	  \ncline[linestyle=solid]{LO32}{LM32}
	  \ncline[linestyle=solid]{LU32}{LM32}
	  \ncarc[arcangle=-50]{-}{LO32}{LU32}
	  
	  \ncline[linestyle=solid]{RO32}{RM32}
	  \ncline[linestyle=solid]{RU32}{RM32}
	  \ncarc[arcangle=50]{-}{RO32}{RU32}

	  %Matchingkanten
	  \ncline[linestyle=solid]{LO32}{RO32}

%Graph 33 (0-ex7)
	  \cnode(17.0,5.9){0.07cm}{Z33} %Zentrum
	  %Unten
	  \cnode(16.0,4.9){0.07cm}{LO33}
	  \cnode(16.0,4.4){0.07cm}{LM33}
	  \cnode(16.0,3.9){0.07cm}{LU33} 
	  
	  \cnode(18.0,4.9){0.07cm}{RO33} 
	  \cnode(18.0,4.4){0.07cm}{RM33}
	  \cnode(18.0,3.9){0.07cm}{RU33}
	  
	  %Zentrumskanten
	  \ncarc[arcangle=-90]{-}{Z33}{LU33}
	  \ncarc[arcangle=-70]{-}{Z33}{LM33}
	  \ncarc[arcangle=-50]{-}{Z33}{LO33}

	  \ncarc[arcangle=90]{-}{Z33}{RU33}
	  \ncarc[arcangle=70]{-}{Z33}{RM33}
	  \ncarc[arcangle=50]{-}{Z33}{RO33}
	  
	  %Untenkanten
	  \ncline[linestyle=solid]{LO33}{LM33}
	  \ncline[linestyle=solid]{LU33}{LM33}
	  \ncarc[arcangle=-50]{-}{LO33}{LU33}
	  
	  \ncline[linestyle=solid]{RO33}{RM33}
	  \ncline[linestyle=solid]{RU33}{RM33}
	  \ncarc[arcangle=50]{-}{RO33}{RU33}
	  
	  %Matchingkanten
	  \ncline[linestyle=solid]{LO33}{RO33}
	  \ncline[linestyle=solid]{LM33}{RM33}
	  
%Graph 34 (0-ex7)
	  \cnode(24.0,5.9){0.07cm}{Z34} %Zentrum
	  %Unten
	  \cnode(23.0,4.9){0.07cm}{LO34}
	  \cnode(23.0,4.4){0.07cm}{LM34}
	  \cnode(23.0,3.9){0.07cm}{LU34} 
	  
	  \cnode(25.0,4.9){0.07cm}{RO34} 
	  \cnode(25.0,4.4){0.07cm}{RM34}
	  \cnode(25.0,3.9){0.07cm}{RU34}
	  
	  %Zentrumskanten
	  \ncarc[arcangle=-90]{-}{Z34}{LU34}
	  \ncarc[arcangle=-70]{-}{Z34}{LM34}
	  \ncarc[arcangle=-50]{-}{Z34}{LO34}

	  \ncarc[arcangle=90]{-}{Z34}{RU34}
	  \ncarc[arcangle=70]{-}{Z34}{RM34}
	  \ncarc[arcangle=50]{-}{Z34}{RO34}
	  
	  %Untenkanten
	  \ncline[linestyle=solid]{LO34}{LM34}
	  \ncline[linestyle=solid]{LU34}{LM34}
	  \ncarc[arcangle=-50]{-}{LO34}{LU34}
	  
	  \ncline[linestyle=solid]{RO34}{RM34}
	  \ncline[linestyle=solid]{RU34}{RM34}
	  \ncarc[arcangle=50]{-}{RO34}{RU34}

	  %Matchingkanten
	  \ncline[linestyle=solid]{LO34}{RO34}
	  \ncline[linestyle=solid]{LO34}{RM34}
	  
%Graph 35 (0-ex7)
	  \cnode(31.0,5.9){0.07cm}{Z35} %Zentrum
	  %Unten
	  \cnode(30.0,4.9){0.07cm}{LO35}
	  \cnode(30.0,4.4){0.07cm}{LM35}
	  \cnode(30.0,3.9){0.07cm}{LU35} 
	  
	  \cnode(32.0,4.9){0.07cm}{RO35} 
	  \cnode(32.0,4.4){0.07cm}{RM35}
	  \cnode(32.0,3.9){0.07cm}{RU35}
	  
	  %Zentrumskanten
	  \ncarc[arcangle=-90]{-}{Z35}{LU35}
	  \ncarc[arcangle=-70]{-}{Z35}{LM35}
	  \ncarc[arcangle=-50]{-}{Z35}{LO35}

	  \ncarc[arcangle=90]{-}{Z35}{RU35}
	  \ncarc[arcangle=70]{-}{Z35}{RM35}
	  \ncarc[arcangle=50]{-}{Z35}{RO35}
	  
	  %Untenkanten
	  \ncline[linestyle=solid]{LO35}{LM35}
	  \ncline[linestyle=solid]{LU35}{LM35}
	  \ncarc[arcangle=-50]{-}{LO35}{LU35}
	  
	  \ncline[linestyle=solid]{RO35}{RM35}
	  \ncline[linestyle=solid]{RU35}{RM35}
	  \ncarc[arcangle=50]{-}{RO35}{RU35}

	  %Matchingkanten
	  \ncline[linestyle=solid]{LO35}{RO35}
	  \ncline[linestyle=solid]{LM35}{RM35}
	  \ncline[linestyle=solid]{LU35}{RU35}
	  
%Graph 36 (0-ex7)
	  \cnode(38.0,5.9){0.07cm}{Z36} %Zentrum
	  %Unten
	  \cnode(37.0,4.9){0.07cm}{LO36}
	  \cnode(37.0,4.4){0.07cm}{LM36}
	  \cnode(37.0,3.9){0.07cm}{LU36} 
	  
	  \cnode(39.0,4.9){0.07cm}{RO36} 
	  \cnode(39.0,4.4){0.07cm}{RM36}
	  \cnode(39.0,3.9){0.07cm}{RU36}
	  
	  %Zentrumskanten
	  \ncarc[arcangle=-90]{-}{Z36}{LU36}
	  \ncarc[arcangle=-70]{-}{Z36}{LM36}
	  \ncarc[arcangle=-50]{-}{Z36}{LO36}

	  \ncarc[arcangle=90]{-}{Z36}{RU36}
	  \ncarc[arcangle=70]{-}{Z36}{RM36}
	  \ncarc[arcangle=50]{-}{Z36}{RO36}
	  
	  %Untenkanten
	  \ncline[linestyle=solid]{LO36}{LM36}
	  \ncline[linestyle=solid]{LU36}{LM36}
	  \ncarc[arcangle=-50]{-}{LO36}{LU36}
	  
	  \ncline[linestyle=solid]{RO36}{RM36}
	  \ncline[linestyle=solid]{RU36}{RM36}
	  \ncarc[arcangle=50]{-}{RO36}{RU36}

	  %Matchingkanten
	  \ncline[linestyle=solid]{LO36}{RO36}
	  \ncline[linestyle=solid]{LM36}{RM36}
	  \ncline[linestyle=solid]{LO36}{RM36}
	  
%Graph 41 (0-ex7)
	  \cnode(3.0,2.4){0.07cm}{Z41} %Zentrum
	  %Unten
	  \cnode(2.0,1.4){0.07cm}{LO41}
	  \cnode(2.0,0.9){0.07cm}{LM41}
	  \cnode(2.0,0.4){0.07cm}{LU41} 
	  
	  \cnode(4.0,1.4){0.07cm}{RO41} 
	  \cnode(4.0,0.9){0.07cm}{RM41}
	  \cnode(4.0,0.4){0.07cm}{RU41}
	  
	  %Zentrumskanten
	  \ncarc[arcangle=-90]{-}{Z41}{LU41}
	  \ncarc[arcangle=-70]{-}{Z41}{LM41}
	  \ncarc[arcangle=-50]{-}{Z41}{LO41}

	  \ncarc[arcangle=90]{-}{Z41}{RU41}
	  \ncarc[arcangle=70]{-}{Z41}{RM41}
	  \ncarc[arcangle=50]{-}{Z41}{RO41}
	  
	  %Untenkanten
	  \ncline[linestyle=solid]{LO41}{LM41}
	  \ncline[linestyle=solid]{LU41}{LM41}
	  \ncarc[arcangle=-50]{-}{LO41}{LU41}
	  
	  \ncline[linestyle=solid]{RO41}{RM41}
	  \ncline[linestyle=solid]{RU41}{RM41}
	  \ncarc[arcangle=50]{-}{RO41}{RU41}
	  
	  %Matchingkanten
	  \ncline[linestyle=solid]{LO41}{RO41}
	  \ncline[linestyle=solid]{LO41}{RM41}
	  \ncline[linestyle=solid]{LU41}{RU41}

%Graph 42 (0-ex7)
	  \cnode(10.0,2.4){0.07cm}{Z42} %Zentrum
	  %Unten
	  \cnode(9.0,1.4){0.07cm}{LO42}
	  \cnode(9.0,0.9){0.07cm}{LM42}
	  \cnode(9.0,0.4){0.07cm}{LU42} 
	  
	  \cnode(11.0,1.4){0.07cm}{RO42} 
	  \cnode(11.0,0.9){0.07cm}{RM42}
	  \cnode(11.0,0.4){0.07cm}{RU42}
	  
	  %Zentrumskanten
	  \ncarc[arcangle=-90]{-}{Z42}{LU42}
	  \ncarc[arcangle=-70]{-}{Z42}{LM42}
	  \ncarc[arcangle=-50]{-}{Z42}{LO42}

	  \ncarc[arcangle=90]{-}{Z42}{RU42}
	  \ncarc[arcangle=70]{-}{Z42}{RM42}
	  \ncarc[arcangle=50]{-}{Z42}{RO42}
	  
	  %Untenkanten
	  \ncline[linestyle=solid]{LO42}{LM42}
	  \ncline[linestyle=solid]{LU42}{LM42}
	  \ncarc[arcangle=-50]{-}{LO42}{LU42}
	  
	  \ncline[linestyle=solid]{RO42}{RM42}
	  \ncline[linestyle=solid]{RU42}{RM42}
	  \ncarc[arcangle=50]{-}{RO42}{RU42}
	  
	  %Matchingkanten
	  \ncline[linestyle=solid]{LO42}{RO42}
	  \ncline[linestyle=solid]{LO42}{RM42}
	  \ncline[linestyle=solid]{LM42}{RM42}
	  \ncline[linestyle=solid]{LM42}{RU42}

%Graph 43 (0-ex7)
	  \cnode(17.0,2.4){0.07cm}{Z43} %Zentrum
	  %Unten
	  \cnode(16.0,1.4){0.07cm}{LO43}
	  \cnode(16.0,0.9){0.07cm}{LM43}
	  \cnode(16.0,0.4){0.07cm}{LU43} 
	  
	  \cnode(18.0,1.4){0.07cm}{RO43} 
	  \cnode(18.0,0.9){0.07cm}{RM43}
	  \cnode(18.0,0.4){0.07cm}{RU43}
	  
	  %Zentrumskanten
	  \ncarc[arcangle=-90]{-}{Z43}{LU43}
	  \ncarc[arcangle=-70]{-}{Z43}{LM43}
	  \ncarc[arcangle=-50]{-}{Z43}{LO43}

	  \ncarc[arcangle=90]{-}{Z43}{RU43}
	  \ncarc[arcangle=70]{-}{Z43}{RM43}
	  \ncarc[arcangle=50]{-}{Z43}{RO43}
	  
	  %Untenkanten
	  \ncline[linestyle=solid]{LO43}{LM43}
	  \ncline[linestyle=solid]{LU43}{LM43}
	  \ncarc[arcangle=-50]{-}{LO43}{LU43}
	  
	  \ncline[linestyle=solid]{RO43}{RM43}
	  \ncline[linestyle=solid]{RU43}{RM43}
	  \ncarc[arcangle=50]{-}{RO43}{RU43}

	  %Matchingkanten
	  \ncline[linestyle=solid]{LO43}{RO43}
	  \ncline[linestyle=solid]{LO43}{RM43}
	  \ncline[linestyle=solid]{LM43}{RM43}
	  \ncline[linestyle=solid]{LU43}{RU43}
	  
%Graph 44 (0-ex7)
	  \cnode(24.0,2.4){0.07cm}{Z44} %Zentrum
	  %Unten
	  \cnode(23.0,1.4){0.07cm}{LO44}
	  \cnode(23.0,0.9){0.07cm}{LM44}
	  \cnode(23.0,0.4){0.07cm}{LU44} 
	  
	  \cnode(25.0,1.4){0.07cm}{RO44} 
	  \cnode(25.0,0.9){0.07cm}{RM44}
	  \cnode(25.0,0.4){0.07cm}{RU44}
	  
	  %Zentrumskanten
	  \ncarc[arcangle=-90]{-}{Z44}{LU44}
	  \ncarc[arcangle=-70]{-}{Z44}{LM44}
	  \ncarc[arcangle=-50]{-}{Z44}{LO44}

	  \ncarc[arcangle=90]{-}{Z44}{RU44}
	  \ncarc[arcangle=70]{-}{Z44}{RM44}
	  \ncarc[arcangle=50]{-}{Z44}{RO44}
	  
	  %Untenkanten
	  \ncline[linestyle=solid]{LO44}{LM44}
	  \ncline[linestyle=solid]{LU44}{LM44}
	  \ncarc[arcangle=-50]{-}{LO44}{LU44}
	  
	  \ncline[linestyle=solid]{RO44}{RM44}
	  \ncline[linestyle=solid]{RU44}{RM44}
	  \ncarc[arcangle=50]{-}{RO44}{RU44}

	  %Matchingkanten
	  \ncline[linestyle=solid]{LO44}{RO44}
	  \ncline[linestyle=solid]{LO44}{RM44}
	  \ncline[linestyle=solid]{LM44}{RU44}
	  \ncline[linestyle=solid]{LU44}{RU44}

%Graph 45 (0-ex7)
	  \cnode(31.0,2.4){0.07cm}{Z45} %Zentrum
	  %Unten
	  \cnode(30.0,1.4){0.07cm}{LO45}
	  \cnode(30.0,0.9){0.07cm}{LM45}
	  \cnode(30.0,0.4){0.07cm}{LU45} 
	  
	  \cnode(32.0,1.4){0.07cm}{RO45} 
	  \cnode(32.0,0.9){0.07cm}{RM45}
	  \cnode(32.0,0.4){0.07cm}{RU45}
	  
	  %Zentrumskanten
	  \ncarc[arcangle=-90]{-}{Z45}{LU45}
	  \ncarc[arcangle=-70]{-}{Z45}{LM45}
	  \ncarc[arcangle=-50]{-}{Z45}{LO45}

	  \ncarc[arcangle=90]{-}{Z45}{RU45}
	  \ncarc[arcangle=70]{-}{Z45}{RM45}
	  \ncarc[arcangle=50]{-}{Z45}{RO45}
	  
	  %Untenkanten
	  \ncline[linestyle=solid]{LO45}{LM45}
	  \ncline[linestyle=solid]{LU45}{LM45}
	  \ncarc[arcangle=-50]{-}{LO45}{LU45}
	  
	  \ncline[linestyle=solid]{RO45}{RM45}
	  \ncline[linestyle=solid]{RU45}{RM45}
	  \ncarc[arcangle=50]{-}{RO45}{RU45}

	  %Matchingkanten
	  \ncline[linestyle=solid]{LO45}{RO45}
	  \ncline[linestyle=solid]{LO45}{RM45}
	  \ncline[linestyle=solid]{LM45}{RM45}
	  \ncline[linestyle=solid]{LM45}{RU45}
	  \ncline[linestyle=solid]{LU45}{RU45}

%Graph 46 (0-ex7)
	  \cnode(38.0,2.4){0.07cm}{Z46} %Zentrum
	  %Unten
	  \cnode(37.0,1.4){0.07cm}{LO46}
	  \cnode(37.0,0.9){0.07cm}{LM46}
	  \cnode(37.0,0.4){0.07cm}{LU46} 
	  
	  \cnode(39.0,1.4){0.07cm}{RO46} 
	  \cnode(39.0,0.9){0.07cm}{RM46}
	  \cnode(39.0,0.4){0.07cm}{RU46}
	  
	  %Zentrumskanten
	  \ncarc[arcangle=-90]{-}{Z46}{LU46}
	  \ncarc[arcangle=-70]{-}{Z46}{LM46}
	  \ncarc[arcangle=-50]{-}{Z46}{LO46}

	  \ncarc[arcangle=90]{-}{Z46}{RU46}
	  \ncarc[arcangle=70]{-}{Z46}{RM46}
	  \ncarc[arcangle=50]{-}{Z46}{RO46}
	  
	  %Untenkanten
	  \ncline[linestyle=solid]{LO46}{LM46}
	  \ncline[linestyle=solid]{LU46}{LM46}
	  \ncarc[arcangle=-50]{-}{LO46}{LU46}
	  
	  \ncline[linestyle=solid]{RO46}{RM46}
	  \ncline[linestyle=solid]{RU46}{RM46}
	  \ncarc[arcangle=50]{-}{RO46}{RU46}
	  %Matchingkanten
	  \ncline[linestyle=solid]{LO46}{RO46}
	  \ncline[linestyle=solid]{LO46}{RM46}
	  \ncline[linestyle=solid]{LM46}{RM46}
	  \ncline[linestyle=solid]{LM46}{RU46}
	  \ncline[linestyle=solid]{LU46}{RU46}
%	  \ncline[linestyle=solid]{LO46}{RU46}
	  \ncline[linestyle=solid]{LU46}{RO46}

\end{pspicture}
\caption{The set $F(\chi,2)$.} 
\label{figure F2 graphs chromatic}
\end{figure}

\begin{theorem}\label{charac of F2 chromatic}
Let $G$ be a graph. 
Then $G\in F(\chi,2)$ if and only if $G$ contains a dominating vertex $v$ and one of the following holds:
\begin{enumerate}
 \item $G-v \cong \overline{K_4}$, \label{claim 4 vertices chi 1}
 \item $G-v$ is a supergraph of $K_2\cup K_2\cup K_1$ and a subgraph of $K_{2,3}$, \label{claim 5 vertices chi 2}
 \item $G-v$ consists of 6 vertices and is a subgraph of $K_6-3e$ such that one of the following holds: \label{claim 6 vertices chi 3}
 \begin{enumerate}
  \item $G-v \cong S_3$, \label{thisisthesun}
  \item $G-v$ is a supergraph of $K_3 \cup K_3$,\label{2 disj triangles}
  \item $G-v \cong C_5^{(3)}$, \label{c5 3 vertices} 
    \item $G-v \cong C_5^{(4)}$.\label{c5 4 vertices} 
  \end{enumerate}
\end{enumerate}
In particular, $F(\omega,2)\setminus \{W_5\} = F(\chi,2)\setminus \{C_5^{(3)}, C_5^{(4)}\}$. 
%$F(\omega,2)\setminus \{W_5\} = F(\chi,2)$. 
\end{theorem}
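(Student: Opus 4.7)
The plan is to apply Theorem~\ref{charac of Fk chromatic} with $k=2$, together with Corollary~\ref{newcoro} and Proposition~\ref{proposition Delta leq 2k+2 chromatic}, and reduce the claim to a finite case analysis. These results together tell us that any $G\in F(\chi,2)$ has a unique dominating vertex $v$, satisfies $\chi(G)\in\{2,3,4\}$, $\Delta(G)=\chi(G)+2$, $|V(G-v)|=\chi(G)+2$, $\chi(G-v)=\chi(G)-1$, and has the property that every optimal colouring of $G-v$ has all colour classes of size at least~$2$; conversely, any graph with these properties lies in $F(\chi,2)$. Since a dominating vertex cannot lie in an induced odd hole or odd antihole of length at least~$5$, $G$ is perfect if and only if $G-v$ is; whenever $G-v$ is perfect I invoke Theorem~\ref{theorem perfekt minforb} and match against Theorem~\ref{charac of F2}.

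I split on $\chi(G)$. The case $\chi(G)=2$ forces $G-v$ to be edgeless on $4$ vertices, i.e.\ $G-v\cong\overline{K_4}$, which is case~\ref{claim 4 vertices chi 1}. The case $\chi(G)=3$ makes $G-v$ bipartite on $5$ vertices, hence perfect, and Theorem~\ref{theorem perfekt minforb} combined with case~\ref{Thm 6 unserpaper nicht C5} of Theorem~\ref{charac of F2} yields case~\ref{claim 5 vertices chi 2}. When $\chi(G)=4$, $G-v$ has $6$ vertices with $\chi(G-v)=3$, and the colouring condition combined with $2+2+2=6$ forces every optimal $3$-colouring of $G-v$ to have all three classes of size exactly~$2$. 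If $G-v$ is perfect, Theorem~\ref{theorem perfekt minforb} together with cases~\ref{Thm 6 of unserpaper S3} and~\ref{Thm 6 of unserpaper subsuper} of Theorem~\ref{charac of F2} yields cases~\ref{thisisthesun} and~\ref{2 disj triangles}.

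The main obstacle is the non-perfect subcase of $\chi(G)=4$. Since $|V(G-v)|=6$ and $\overline{C_5}\cong C_5$, non-perfection forces an induced $C_5=v_1v_2v_3v_4v_5v_1$ in $G-v$, with the sixth vertex $u$ attached to some subset $N(u)\cap V(C_5)$. Using the dihedral symmetry of $C_5$, the possible attachments reduce to a short list indexed by $|N(u)\cap V(C_5)|$, with sub-cases when this size is $2$ (adjacent vs.\ non-adjacent) and when it is $3$ (consecutive vs.\ non-consecutive); the value $5$ is ruled out since then $\chi(G-v)=4$. For all other attachments except ``three consecutive'' and ``four'', I would exhibit an optimal $3$-colouring of $G-v$ containing a singleton colour class, constructed by starting from a $3$-colouring of $C_5$ with a singleton and choosing $u$'s colour so as to preserve it; this violates Condition~\ref{second condition chromatic} of Theorem~\ref{charac of Fk chromatic} and excludes these attachments. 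For ``three consecutive'' (giving $G-v\cong C_5^{(3)}$) and ``four'' (giving $G-v\cong C_5^{(4)}$), the neighbourhood of $u$ in $C_5$ contains an edge, forcing its vertices to use exactly two colours in any optimal $3$-colouring; $u$ must then take the third colour and the remaining $C_5$-vertices are uniquely forced to complete a $(2,2,2)$-colouring, so Condition~\ref{second condition chromatic} holds. This yields cases~\ref{c5 3 vertices} and~\ref{c5 4 vertices}.

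For the final equality $F(\omega,2)\setminus\{W_5\}=F(\chi,2)\setminus\{C_5^{(3)},C_5^{(4)}\}$, observe that $W_5$ (the graph with $G-v\cong C_5$) is the unique non-perfect member of $F(\omega,2)$ by case~\ref{Thm 6 of unserpaper C5} of Theorem~\ref{charac of F2}, and $W_5\notin F(\chi,2)$ since $\Delta(W_5)-\chi(W_5)=5-4=1\neq 2$. Conversely, the non-perfect members of $F(\chi,2)$ are precisely the graphs with $G-v\cong C_5^{(3)}$ or $G-v\cong C_5^{(4)}$, by the case analysis above. Removing these non-perfect graphs from each side leaves only perfect members, which coincide by Theorem~\ref{theorem perfekt minforb}.
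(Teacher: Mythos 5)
Your proposal is correct and follows essentially the same route as the paper: it reduces everything to the three conditions of Theorem~\ref{charac of Fk chromatic}, handles the perfect case by matching against Theorem~\ref{charac of F2} via Theorem~\ref{theorem perfekt minforb}, and settles the non-perfect case by locating an induced $C_5$ in $G-v$ and analysing how the sixth vertex attaches to it, excluding all attachments except ``three consecutive'' and ``four'' by exhibiting an optimal colouring with a singleton class. The only differences are organisational (you split the perfect case by the value of $\chi(G)$, and you spell out the colouring argument for $C_5^{(3)}$ and $C_5^{(4)}$ that the paper leaves as ``easy to see'').
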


\begin{proof}
With Condition~\ref{first condition chromatic}, 
\ref{second condition chromatic} and \ref{third condition chromatic},  
we refer to the conditions stated in Theorem~\ref{charac of Fk chromatic}. 
With $[i.]$ we refer to Condition $i$ listed in Theorem~\ref{charac of F2 chromatic}.

Let $G$ be a graph with a unique dominating vertex $v$. 
Let $G-v$ obey $[\ref{claim 4 vertices chi 1}.]$, 
$[\ref{claim 5 vertices chi 2}.]$, $[\ref{thisisthesun}.]$ or $[\ref{2 disj triangles}.]$. 
Then $G-v$ is a perfect graph and 
obeys Condition~\ref{Thm 6 of unserpaper K4}, 
\ref{Thm 6 unserpaper nicht C5}, 
\ref{Thm 6 of unserpaper S3} or~\ref{Thm 6 of unserpaper subsuper} 
of Theorem~\ref{charac of F2}, 
respectively. 
By Theorem~\ref{theorem perfekt minforb}, 
$G \in F(\chi,2)$. 
If $G-v$ obeys $[\ref{c5 3 vertices}.]$ or $[\ref{c5 4 vertices}.]$, 
then it is easy to see that $G-v$ obeys Condition~\ref{first condition chromatic}, \ref{second condition chromatic} and \ref{third condition chromatic} of Theorem~\ref{charac of Fk chromatic}, 
and that hence, $G \in F(\chi,2)$. 

To show the reverse direction, let $G$ be a graph in $F(\chi,2)$. 
Observe that by Condition~\ref{first condition chromatic} of 
Theorem~\ref{charac of Fk chromatic}, 
$G$ has a unique dominating vertex, say $v$. 
%Let $G$ (and hence $G-v$) be a perfect graph. 
By Theorem~\ref{theorem perfekt minforb}, 
$F(\chi,2) \cap PG = F(\omega,2) \cap PG$. 
In other words, 
if $G$ is a graph in $F(\chi,2)$ that is perfect, 
then and only then $G$ is a graph in $F(\omega,2)$ that is perfect. 
Therefore, 
$G$ obeys 
$[\ref{claim 4 vertices chi 1}.]$, 
$[\ref{claim 5 vertices chi 2}.]$, $[\ref{thisisthesun}.]$ 
or~$[\ref{2 disj triangles}.]$. 
Let now $G$ be a non-perfect graph. 
By Proposition~\ref{proposition Delta leq 2k+2 chromatic}, we have $2 \leq \chi(G) \leq 4$, therefore $|G|=\Delta(G)+1$ equals 5, 6 or 7 and 
thus $|G-v|=4$, $5$ or $6$ and $\chi(G-v)=1$, $2$ or $3$, respectively.  
Non-perfectness implies $\chi(G-v)= 3$ and therefore $|G-v|=6$. 
The only odd hole respectively anti-hole that can be embedded as induced 
subgraph in $G-v$ is therefore $C_5$. 
Hence, let $C$ be an induced $C_5$ in $G-v$ and let 
$u \in G-v$ be the vertex not in $C$. 
Since $v$ is a unique dominating vertex, 
$u$ is adjacent to at most four vertices of $C_5$. 
Moreover, 
if $u$ is adjacent to at most two vertices of $C_5$, 
or to three vertices of $C_5$ that are not consecutively ordered, 
then we always find a coloring of $G-v$ where one vertex forms a singleton color class, 
contradicting Condition~\ref{first condition chromatic} of 
Theorem~\ref{charac of Fk chromatic}. 
Hence, 
$G \cong C_5^{(3)}$ or 
$G \cong C_5^{(4)}$. 
By checking the three conditions listed in 
Theorem~\ref{charac of Fk chromatic}, 
is easy to see that both these graphs are in $F(\chi,2)$. 
This completes the proof.  
\end{proof}

To sum up, $F(\omega,0)=F(\chi,0)$, $F(\omega,1)=F(\chi,1)$, 
but 
$$F(\omega,2)\setminus \{W_5\}=F(\chi,2)\setminus \{C_5^{(3)}, C_5^{(4)}\}.$$ 
Observe that both  $C_5^{(3)}$ and $C_5^{(4)}$ contain $W_5$ as induced subgraph. 
The question arises what separates the set $F(\chi,k)$ from the set $F(\omega,k)$ for a fixed $k \in \mathbb{N}$. 
In order to answer this question, 
we will genereralize the result for $k=2$ in the following section. 

Before we proceed, 
observe that $|F(\chi,0)|=1$, $|F(\chi,1)|=4$ and $|F(\chi,2)|=24$. 
Moreover, $|F(\chi,3)|=402$ and $|F(\chi,4)|=25788$ (cf.~\cite{Jan.Personal}), 
hence, although finite, the sets of minimal forbidden induced subgraphs 
seem to grow very quickly compared to the increase of $k$.  
All minimal forbidden induced subgraphs for $k=1$, $2$ and $3$ 
can be downloaded from \textit{House of Graphs}~\cite{house.of.graphs.dam} by searching for the keywords ``maximum degree * chromatic number'' or 
``chi(G) + k'' where $k=1$, $2$ or $3$.

\section{Neighborhood perfect graphs}\label{section neighborhood perfect graphs}

By the Strong Perfect Graph Theorem~\cite{Perfect.graphs}, 
a graph is perfect if and only if it is free of 
\textit{odd holes} and \textit{odd anti-holes}. 
An odd hole is a cycle of odd length with at least five vertices. 
An odd anti-hole is the complement of an odd hole. 
We say that a graph is \emph{neighborhood perfect}  
if the neighborhood of every vertex is perfect. 
Observe that a graph is neighborhood perfect if and only if it does not contain an odd hole with a dominating vertex, 
or an odd anti-hole with a dominating vertex.

\begin{lemma}\label{lemma neighborhood Ffree}
Let $G$ be a neighborhood perfect graph and let $k \in \mathbb{N}_0$.  
$G$ is $F(\omega,k)$-free if and only if 
$G$ is $F(\chi,k)$-free. 
\end{lemma}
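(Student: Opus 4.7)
The plan is to split into the two directions. The forward direction---that $F(\omega,k)$-freeness implies $F(\chi,k)$-freeness---requires no hypothesis on $G$ and is immediate from Observation~\ref{observation upsilon part of omega fixed k}: since $\varOmega_k \subseteq \varUpsilon_k$, any graph avoiding all of $F(\omega,k)$ lies in $\varOmega_k$, hence in $\varUpsilon_k$, and so avoids all of $F(\chi,k)$. The reverse direction is the interesting one, and is where neighborhood perfectness enters.

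For the reverse direction I would argue the contrapositive: assume $G$ is neighborhood perfect and contains some $H \in F(\omega,k)$ as an induced subgraph, and produce an induced subgraph of $G$ lying in $F(\chi,k)$. The natural candidate is $H$ itself, and the task reduces to showing that $H$ is perfect; once that is established, Theorem~\ref{theorem perfekt minforb} promotes $H$ directly from $F(\omega,k)$ to $F(\chi,k)$, and we are done.

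To establish perfectness of $H$ I would proceed in three short steps. First, neighborhood perfectness is hereditary under induced subgraphs: for any vertex $u$ of an induced subgraph $G'$ of $G$, the neighborhood $N_{G'}(u) = N_G(u) \cap V(G')$ is an induced subgraph of the perfect graph $N_G(u)$, hence perfect. So $H$ is neighborhood perfect. Second, by Theorem~\ref{charac of Fk} the graph $H$ has a dominating vertex $v$, so $N_H(v) = V(H) \setminus \{v\}$; the induced subgraph on this neighborhood is exactly $H - v$, which is therefore perfect. Third, adjoining a dominating vertex to a perfect graph preserves perfectness: on any induced subgraph containing the new vertex both $\chi$ and $\omega$ rise by exactly one, so $\chi = \omega$ is inherited from the subgraph without the dominating vertex. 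Thus $H$ itself is perfect, as required.

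I do not foresee a serious obstacle. The only point that merits care is the identification $N_H(v) = V(H) \setminus \{v\}$ for the dominating vertex $v$ supplied by Theorem~\ref{charac of Fk}, which is precisely what allows the neighborhood-perfect hypothesis on $G$ to yield perfectness of the whole graph $H$ rather than merely of some proper induced subgraph. After that single observation, Theorem~\ref{theorem perfekt minforb} closes the argument.
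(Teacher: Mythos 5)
Your proof is correct and follows essentially the same route as the paper: both arguments reduce to the observation that a member of $F(\omega,k)$ (or $F(\chi,k)$) has a dominating vertex $v$, so neighborhood perfectness forces $H-v$, and hence $H$ itself, to be perfect, after which Theorem~\ref{theorem perfekt minforb} transfers membership between the two families. You are in fact slightly more explicit than the paper, which invokes Theorem~\ref{theorem perfekt minforb} after noting only that $H-v$ is perfect and leaves tacit the step that adjoining a dominating vertex preserves perfectness.
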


\begin{proof}
Let $G$ be a neighborhood perfect graph 
and let $H$ be an induced subgraph of $G$ with a dominating vertex, say $v$. 
Since $H$ is neighborhood perfect, $H-v$ is perfect. 
Let $k \in \mathbb{N}_0$ be such that $H \in F(\omega,k)$ 
or \mbox{$H \in F(\chi,k)$}. 
In this case, by Theorem~\ref{theorem perfekt minforb}, 
$H \in F(\chi,k)$ or $H \in F(\omega,k)$, respectively. 
\end{proof}

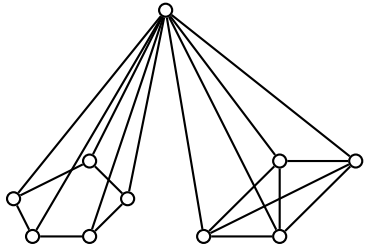
\begin{figure}
\centering
\begin{pspicture}(8,5) %Breite mal H�he
	  \cnode(3.5,4){0.1cm}{A}
	  
	  %W5-Knoten
	  \cnode(1.5,1.5){0.1cm}{B} %C5 linkester
	  \cnode(1.75,1){0.1cm}{C}
	  \cnode(2.5,1){0.1cm}{D}
	  \cnode(3.0,1.5){0.1cm}{E} %C5 rechtester
	  \cnode(2.5,2){0.1cm}{F} %C5 oberster
	  
	  %W5 Zugangskanten
	  \ncline[linestyle=solid]{A}{B}
	  \ncline[linestyle=solid]{A}{C}
	  \ncline[linestyle=solid]{A}{D}
	  \ncline[linestyle=solid]{A}{E}
	  \ncline[linestyle=solid]{A}{F}
	  
	  %W5 innerhalb Kanten
	  \ncline[linestyle=solid]{F}{B}
	  \ncline[linestyle=solid]{B}{C}
	  \ncline[linestyle=solid]{C}{D}
	  \ncline[linestyle=solid]{D}{E}
	  \ncline[linestyle=solid]{E}{F}
	  
	  %K4 Knoten
	  \cnode(5,2){0.1cm}{G} 
	  \cnode(4,1){0.1cm}{H} %K4 linkester
	  \cnode(6,2){0.1cm}{I}%K4 rechtester
	  \cnode(5,1){0.1cm}{J}
	  
	  %K4 Zugangskanten
	  \ncline[linestyle=solid]{A}{G}
	  \ncline[linestyle=solid]{A}{H}
	  \ncline[linestyle=solid]{A}{I}
	  \ncline[linestyle=solid]{A}{J}
	  
	  %K4 innerhalb
	  \ncline[linestyle=solid]{G}{H}
	  \ncline[linestyle=solid]{G}{I}
	  \ncline[linestyle=solid]{G}{J}
	  \ncline[linestyle=solid]{H}{I}
	  \ncline[linestyle=solid]{H}{J}
	  \ncline[linestyle=solid]{I}{J}

	  %\rput(2.3,2){$\omega$}
\end{pspicture}
\caption{The opposite of Lemma~\ref{lemma neighborhood Ffree} is not true. 
In other words, we can not ommit the subgraph condition of Theorem \ref{neighborhood perfect theorem}.}
\label{neighborhood perfect G}
\end{figure}

Note that the opposite of Lemma~\ref{lemma neighborhood Ffree} is not true.  
%that is, even if for a graph $G$ it holds that $G$ is $F(\omega,k)$ free if and only 
%if $G$ is $F(\chi,k)$-free, this does not imply that $G$ is neighborhood perfect.  
Consider for example the graph drawn in Figure~\ref{neighborhood perfect G}, 
that is, the union of a $K_4$ and a $C_5$, where a dominating vertex is added. 
This graph, say $G$, is not neighborhood perfect, since $W_5$ is an induced subgraph of $G$. 
But, since $\Delta(G)=9$, $\omega(G)=5$ and $\chi(G)=5$, 
$G \in \varOmega_k$ and $G \in \varUpsilon_k$ for all $k \geq 5$, 
and $G \not \in \varOmega_k$ and $G \not \in \varUpsilon_k$ for $k \leq 4$. 
That is, $G \in \varOmega_k$ if and only if $G \in \varUpsilon_k$, for all $k \in \mathbb{N}_0$. 
In other words, 
$G$ is not neighborhood perfect, 
but for all $k \in \mathbb{N}_0$, $G$ is $F(\omega,k)$-free 
if and only if $G$ is $F(\chi,k)$-free. 

The proof of Theorem~\ref{neighborhood perfect theorem} needs a short preparation. 
The complement of a graph $G$ is denoted by $\overline{G}$. 
Consider the graph $W_{2r+3}$, $r\geq 1$ and observe that 
$W_{2r+3} \in F(\omega, 2r)$.  
Since $\chi(W_{2r+3})=4$, $W_{2r+3} \not \in F(\chi,2r)$.
Let further $B_{2r+3}$, $r \geq 1$, 
denote the anti-hole with $2r+3$ vertices where a dominating vertex is added. 
Observe that $B_{2r+3} \in F(\omega,r+1)$. 
It is easy to see that $\chi(B_{2r+3})=r+3$ and $\Delta(B_{2r+3})=2r+3$, 
hence $B_{2r+3} \not \in F(\chi,r+1)$.

% If we draw the vertices of $B_{2r+3}-v$ in a cycle and order the vertices such that every vertex is in between 
% the two vertices not adjacent to it, we find an optimal coloring with $r+1$ colors the following way.
% Start with one arbitrary vertex, 
% name it with 1, go to its clockwise neighbor, name this one 2, and so on, until you reach number $2r+3$, 
% what is the neighbor of 1 again. 
% Assign color 1 to vertices 1 and 2, color 2 to vertices 3 and 4 and so on.
% Observe that vertex $2r+3$ is the only vertex to receive color $r+1$, and $v$ is colored with $r+2$, 
% and that this coloring is an optimal coloring of $B_{2r+3}$.
% Hence, 

\begin{observation}
Let $k \in \mathbb{N}$, $k \geq 2$. 
Then $W_{k+3} \in F(\omega,k) \setminus F(\chi,k)$ and 
$B_{2k+1} \in F(\omega,k) \setminus F(\chi,k)$. 
\end{observation}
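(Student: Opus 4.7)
The plan is to verify the required properties directly from the structural characterizations in Theorem~\ref{charac of Fk} and Theorem~\ref{charac of Fk chromatic}, building on the explicit computations of $\Delta$, $\omega$ and $\chi$ carried out in the paragraph immediately before the statement. For each of the two graphs I will first verify the three conditions of Theorem~\ref{charac of Fk} (giving $F(\omega,k)$-membership) and then show that Condition~\ref{third condition chromatic} of Theorem~\ref{charac of Fk chromatic} fails (giving $F(\chi,k)$-non-membership).

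For $W_{k+3}$ let $v$ be the added dominating vertex and $v_1,\dots,v_{k+3}$ the cycle. Since $k\geq 2$ gives $k+3\geq 5$, every $v_i$ has degree $3$, so $v$ is the unique dominating vertex, verifying Condition~\ref{first cond}. The maximum cliques are the triangles $\{v,v_i,v_{i+1}\}$ (no three cycle vertices are mutually adjacent), and any two such triangles built from disjoint cycle edges meet only in $\{v\}$, giving Condition~\ref{second cond}. Finally $\Delta(W_{k+3})=k+3$ and $\omega(W_{k+3})=3$ give Condition~\ref{third cond}, so $W_{k+3}\in F(\omega,k)$. As recorded in the paragraph above, $\chi(W_{k+3})=4$, whence $\Delta(W_{k+3})-\chi(W_{k+3})=k-1\neq k$ and Condition~\ref{third condition chromatic} of Theorem~\ref{charac of Fk chromatic} fails.

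For $B_{2k+1}$ let $v$ be the dominating vertex and $u_1,\dots,u_{2k+1}$ the anti-hole. Again $v$ is the unique dominating vertex. The maximum cliques of $B_{2k+1}$ are exactly the sets $\{v\}\cup I$ as $I$ ranges over the maximum independent sets of $C_{2k+1}$; since these independent sets have empty common intersection for $2k+1\geq 5$, the cliques intersect only in $\{v\}$. Combined with $\omega(B_{2k+1})=k+1$ and $\Delta(B_{2k+1})=2k+1$, Theorem~\ref{charac of Fk} places $B_{2k+1}$ in $F(\omega,k)$. To rule out $F(\chi,k)$, I compute $\chi(\overline{C_{2k+1}})=k+1$: a proper colouring of the odd anti-hole corresponds to a partition of $V(C_{2k+1})$ into cliques, that is, into matching edges and singletons, which requires at least $\lceil(2k+1)/2\rceil=k+1$ parts. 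Therefore $\chi(B_{2k+1})=k+2$ and $\Delta(B_{2k+1})-\chi(B_{2k+1})=k-1\neq k$, again violating Condition~\ref{third condition chromatic}.

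The main obstacle is the chromatic-number computation for the odd anti-hole; once that is settled, the rest is routine bookkeeping against the two characterization theorems. Conceptually, the observation makes precise the fact that non-perfectness of $G-v$ is exactly what produces graphs in $F(\omega,k)\setminus F(\chi,k)$: the odd hole and the odd anti-hole each force a gap of one between $\omega(G-v)$ and $\chi(G-v)$, which translates directly into a failure of the $\chi$-based version of Condition~\ref{third cond}.
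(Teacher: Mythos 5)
Your overall strategy coincides with the paper's: the preparatory paragraph immediately before the observation establishes $F(\omega,k)$-membership via Theorem~\ref{charac of Fk} and then rules out $F(\chi,k)$-membership by computing $\chi$ and checking that Condition~\ref{third condition chromatic} of Theorem~\ref{charac of Fk chromatic} fails. Your treatment of $B_{2k+1}$ is correct and actually supplies the clique-cover argument for $\chi(\overline{C_{2k+1}})=k+1$ that the paper dismisses as ``easy to see''; the values $\omega(B_{2k+1})=k+1$, $\chi(B_{2k+1})=k+2$, $\Delta(B_{2k+1})=2k+1$ all check out and give $\Delta-\chi=k-1\neq k$.

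There is, however, a genuine gap in the $W_{k+3}$ part. You write that $\chi(W_{k+3})=4$ ``as recorded in the paragraph above,'' but that paragraph only records $\chi(W_{2r+3})=4$, i.e.\ the case of an \emph{odd} cycle, which corresponds to even $k=2r$. When $k$ is odd, $k+3$ is even, so $\chi(C_{k+3})=2$ and $\chi(W_{k+3})=3$, whence $\Delta(W_{k+3})-\chi(W_{k+3})=(k+3)-3=k$; moreover every optimal $2$-colouring of an even cycle has both colour classes of size at least two, so all three conditions of Theorem~\ref{charac of Fk chromatic} are met and $W_{k+3}\in F(\chi,k)$. Concretely, $W_6\in F(\omega,3)\cap F(\chi,3)$, so the claimed exclusion fails at $k=3$. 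This is as much a defect of the observation as printed as of your argument --- the paper's own preparation only supports the wheel claim for $W_{2r+3}$, i.e.\ for even $k$ --- but your proof conceals the problem by quoting a chromatic number that is false for odd $k$. The fix is to restrict the wheel half of the statement to even $k$ (equivalently, to odd cycles), which is all that the application in Theorem~\ref{neighborhood perfect theorem} requires, since a graph that is not neighborhood perfect contains a dominated odd hole $W_{2r+3}$ or a dominated odd anti-hole $B_{2r+3}$.
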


We are now in the position to state Theorem~\ref{neighborhood perfect theorem}. 

\begin{theorem}\label{neighborhood perfect theorem}
 Let $G$ be a graph.
 Then the following statements are equivalent:
 \begin{enumerate}
  \item $G$ is a neighborhood perfect graph.
  \item For all $k \in \mathbb{N}_0$ and all induced subgraphs $H$ of $G$, 
	$ H \in \varOmega_k$ if and only if $H\in \varUpsilon_k$.
%  \item $G \in \varOmega_k$ if and only if $G \in \Upsilon_k$ for all $k \in \mathbb{N}_0$.
 \end{enumerate}
\end{theorem}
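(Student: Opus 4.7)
The plan is to prove the two implications separately, using Lemma~\ref{lemma neighborhood Ffree} for one direction and the Observation immediately preceding the theorem for the other. For $(1)\Rightarrow(2)$ I would take any induced subgraph $H$ of $G$; since neighborhood perfectness is hereditary, $H$ is itself neighborhood perfect. The characterization ``$H\in\varUpsilon_k$ iff $H$ is $F(\chi,k)$-free'' (and the analogous statement for $\varOmega_k$) lets me rewrite ``$H\in\varOmega_k$ iff $H\in\varUpsilon_k$'' as ``$H$ is $F(\omega,k)$-free iff $H$ is $F(\chi,k)$-free'', which is exactly Lemma~\ref{lemma neighborhood Ffree} applied to $H$.

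For $(2)\Rightarrow(1)$ I would argue by contrapositive. If $G$ is not neighborhood perfect, then by the characterization recalled at the start of the section, $G$ contains an odd hole or odd anti-hole together with a dominating vertex; that is, $G$ has an induced subgraph isomorphic to $W_{2r+3}$ or $B_{2r+3}$ for some $r\geq 1$. I would then show that this offending subgraph $H$ lies in $\varUpsilon_k\setminus\varOmega_k$ for a suitable $k$, contradicting (2). Concretely, for $H=W_{2r+3}$ and $k=2r$, the Observation gives $H\in F(\omega,2r)$, so $H\notin\varOmega_{2r}$. Combining the minimality of $H$ in $F(\omega,2r)$ with Observation~\ref{observation upsilon part of omega fixed k} forces every proper induced subgraph of $H$ to lie in $\varOmega_{2r}\subseteq\varUpsilon_{2r}$, while the arithmetic $\Delta(H)+1=2r+4=\chi(H)+2r$ shows that $H$ itself satisfies the $\varUpsilon_{2r}$ inequality; hence $H\in\varUpsilon_{2r}$. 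The case $H=B_{2r+3}$ with $k=r+1$ is analogous, using the recorded values $\chi(B_{2r+3})=r+3$ and $\Delta(B_{2r+3})=2r+3$.

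The hard part is not a difficulty of ideas but a bookkeeping check: one needs $\Delta(H)+1=\chi(H)+k$ to hold exactly (so that $H$ sits on the boundary of $\varUpsilon_k$) while $\Delta(H)>\omega(H)+k-1$ (so that $H$ falls outside $\varOmega_k$), and one must be certain that the minimality of $H$ in $F(\omega,k)$ really does transfer membership of all proper induced subgraphs into $\varUpsilon_k$. Once these two checks are performed for both $W_{2r+3}$ and $B_{2r+3}$, both implications follow.
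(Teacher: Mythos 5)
Your proposal is correct and follows essentially the same route as the paper: the forward direction via heredity of neighborhood perfectness and Lemma~\ref{lemma neighborhood Ffree}, and the reverse direction by exhibiting $W_{2r+3}$ or $B_{2r+3}$ as a member of $F(\omega,k)\setminus F(\chi,k)$ for the appropriate $k$, hence a witness in $\varUpsilon_k\setminus\varOmega_k$. Your explicit justification that such an $H$ lies in $\varUpsilon_k$ (all proper induced subgraphs lie in $\varOmega_k\subseteq\varUpsilon_k$ by minimality, and $H$ itself meets the $\varUpsilon_k$ inequality with equality) spells out a step the paper leaves implicit, but it is the same argument.
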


\begin{proof}
 Let $G$ be a neighborhood perfect graph and let $H$ be an induced subgraph of $G$. 
Hence, $H$ is also neighborhood perfect.  
 By Lemma~\ref{lemma neighborhood Ffree}, for every $k \in \mathbb{N}_0$, 
 $H$ is $F(\omega,k)$-free if and only if $H$ is $F(\chi,k)$-free. 
 
 Let on the other hand $G$ be a graph that is not neighborhood perfect. 
 Then, for some $k \geq 1$, 
 $G$ contains a subgraph, say $H$, such that $H \cong W_{k+3}$ or $H \cong B_{2k+1}$ 
 and hence a graph that is contained in $F(\omega,k)$, but not in $F(\chi,k)$. 
 Hence, $H \in \varUpsilon_k\setminus \varOmega_k$. 
 This completes the proof. 
 \end{proof}

Note that this result does not imply a polynomial algorithm for the recognition of 
neighborhood perfect graphs. 
However, due to~\cite{Perfect.graphs.recognition}, in polynomial time it is possible to check if the neighborhood of every vertex in a graph is perfect. 

\section{Final remarks}\label{Sec4}

We introduced the graph classes $\varUpsilon_k$, 
for all $k \in \mathbb{N}_0$. 
A member of $\varUpsilon_k$ has the property that 
all its induced subgraphs comply with 
$\Delta \leq \chi + k - 1$. 
We showed that those graphs can be characterized by a finite set of minimal forbidden induced subgraphs. 

Moreover, we found some relations to the results presented 
in~\cite{unserpaper}, 
where we require $\Delta \leq \omega + k - 1$ for every induced subgraph. 
In particular, the neighborhood perfect graphs are exactly those graphs 
for which $\varUpsilon_k$ and $\varOmega_k$ 
coincide for all $k \in \mathbb{N}_0$ and all induced subgraphs of the graph.

A future direction might 
restrict the graph classes $\varUpsilon_k$ to 
some graph universe like the claw-free graphs. 
There, a minimal forbidden induced subgraph $G$  
has a unique dominating vertex $v$, $\Delta(G)=\chi(G) + k$ 
but in every optimal coloring of $G-v$, 
every color class contains \textit{exactly} two vertices. 
This might lead to interesting results concerning the structure 
of minimal forbidden induced subgraphs. 
Furthermore, the question arises what happens when we compare $\varUpsilon_k$ to $\varOmega_j$
for some $j \in \mathbb{N}_0$, where $j \not = k$.
Finally, it might be of interest to focus on further 
graph parameters.  
We currently try to adapt our methods to the 
complement graph parameters of $\omega$ and $\chi$, 
that is, replacing $\omega$ or $\chi$ by the maximum size of an independet set or the clique cover number of the graph. 
In particular, 
we try to generalize our results, 
focussing on monotone graph parameters, 
where in our understanding, 
a parameter $\phi$ is monotone if for 
every \textit{induced} subgraph $H$ of some graph $G$, 
$\phi(H)\leq \phi(G)$. 

\section{Acknowledgement} 

We want to thank both reviewers for their very helpful and encouraging comments. Moreover, we want to thank Jan Goedgebeur from Ghent University, Belgium,  
for his very helpful computation of the sets $F(\chi,k)$, 
for small $k$. 
The reader interested in 
%the sets $F(\chi,k)$, $k \leq 3$, 
those sets 
is invited to have a look at 
\textit{House of Graphs}~(cf.~\cite{house.of.graphs.dam}).


\begin{thebibliography}{}
\bibitem{house.of.graphs.dam}
	G. Brinkmann, K. Coolsaet, J. Goedgebeur, H. M\'{e}lot, 
	House of Graphs: A database of interesting graphs, 
	Discrete Appl. Math. 161, 311-314 (2013).  
	DOI: 10.1016/j.dam.2012.07.018,  
	available at http://hog.grinvin.org/.
\bibitem{Perfect.graphs}
	M. Chudnovsky, N. Robertson, P.D. Seymour and R. Thomas, 
	The strong perfect graph theorem, 
	Ann. Math. 164, 51-229 (2006). 
	DOI: 10.4007/Fannals.2006.164.51.
\bibitem{Perfect.graphs.recognition}
	M. Chudnovsky, G. Cornu\'{e}jols, X. Liu, P.D. Seymour and K. Vu\v{s}kovi\'{c}, 
	Recognizing Berge Graphs, 
	Combinatorica 25, 143-186 (2005). 
	DOI: 10.1007/s00493-005-0012-8.
	\bibitem{B.Toft}
	B. Toft, 
	Coloring, stable sets and perfect graphs, 
	in R. Graham, M. Gr\"otschel, L. Lov\'asz, eds., 
	Handbook of Combinatorics, Vol. I, 
	North-Holland, pp. 233 - 288 (1995).
\bibitem{M.C.Golumbic}
	M.C. Golumbic, 
	Algorithmic Graph Theory and Perfect Graphs, 
	Academic Press, New York (1980).
\bibitem{Brooks}
	R.L. Brooks, On colouring the nodes of a network, Math. Proc. Cambridge Philos. Soc. 37, 194-197 (1941).
\bibitem{unserpaper}
	O. Schaudt, V. Weil,  
	On bounding the difference between the maximum degree and the clique 			number, 
	Graphs and Combinatorics 31(5), 1689-1702 (2015). 
	DOI: 10.1007/s00373-014-1468-3.
                                        

\bibitem{Jan.Personal}	
	Jan Goedgebeur (Ghent University, Belgium). Personal Communication. 
	
\end{thebibliography}
\end{document}